\newtheorem{theorem}{Theorem}[section]
\newtheorem{lemma}{Lemma}[section]
\newenvironment{proof}{{\em Proof.}}{\hspace*{\fill}$\Box$}
\newcommand{\m}[1]{\mathbf{#1} }
\renewcommand{\v}[1]{\boldsymbol{#1}}
\newcommand{\bb}[1]{\mathbb{#1}}
\renewcommand{\c}[1]{\mathcal{#1}}
\newcommand{\di}{\mathrm{d}}
\newcommand{\cas}{\stackrel{\mathrm{a.s.}}{\longrightarrow}}
\newcommand{\gvn}{\, |\,}
\DeclareMathOperator*{\argmin}{argmin}
\DeclareMathOperator*{\argmax}{argmax}
\begin{document}

\title{Regenerative Simulation for the Bayesian Lasso}

\author{Y.-L. Chen \& Z. I. Botev \\ UNSW Sydney, Australia }

\maketitle

\begin{abstract}
The Gibbs sampler of Park and Casella 
is one of the most popular MCMC methods for
 sampling from the posterior density of the Bayesian Lasso regression. As with many  Markov chain samplers, their Gibbs sampler  lacks a theoretically sound   method of output analysis --- a method for estimating the variance of a given ergodic average and estimating how closely the chain is sampling from the stationary distribution, that is, the burn-in.

In this paper, we address this shortcoming by identifying   regenerative structure in the 
 sampler of Park and Casella, thus providing a 
theoretically sound method of assessing its 
performance. The 
regenerative structure provides both a strongly consistent variance estimator, and an estimator of (an upper bound on) the 
total variation distance from the target posterior density. The result is a simple and theoretically sound way to assess the stationarity of the  Park and Casella  and, more generally, other MCMC samplers, for which regenerative simulation is possible.

We  perform a numerical study in which we validate the standard errors calculated by our regenerative method by comparing it with the standard errors calculated by an AR(1)
heuristic approximation. Thus, we show that for the  Bayesian Lasso model, the regenerative method is a viable and theoretically justified alternative to the existing ad-hoc MCMC convergence diagnostics.

\end{abstract}

\section{Introduction}
\label{intro}
The  linear Lasso regression and its Bayesian analogue are studied extensively and have appealed to many practitioners \cite{Park2008bayesian}. Inference for the Bayesian Lasso  requires one to take expectations with respect to $\pi$, the posterior density. These expectations are intractable and call for Monte Carlo statistical methods such as Markov chain Monte Carlo (MCMC).

The idea is to construct a Markov chain $\{\v X_0,\ldots,\v X_t\}$ starting from some (possibly random) initial state $\v X_0$, with invariant density $\pi$, so that the average of the sample path converges to the expectation one wishes to compute. Denoting $\bb E_\pi h$ the expectation of $h$ with respect to $\pi$, we have under suitable conditions:
\begin{equation}
\bar h_t:=\frac{1}{t+1}\sum_{k=0}^{t}h(\v X_k)\cas \bb E_\pi h,\quad t\rightarrow\infty.
\label{eq:MeanEstimate}
\end{equation}
 One of the most popular MCMC samplers for the Bayesian Lasso is the Gibbs sampler of Park and Casella \cite{Park2008bayesian}.
Despite its wide use,  the sampler still lacks a systematic way to: (i) estimate the variability of the estimator $\bar h_t$; (ii) assess how closely (in total variation distance) a state of the Markov chain follows the target posterior (this is related to the problem of  estimating the size of the burn-in of the Markov chain).

Currently one resorts to heuristic approximations to address both (i) and (ii). 
For example, a popular approach to address (i) is the \emph{batch means} variance estimator to estimate the standard error of $\bar h_t$. The batch means variance estimator requires covariance stationarity  \cite{law1984confidence1}, which is difficult to verify in practice. Furthermore, for the batch means estimator to be consistent, each batch size has to diverge to infinity and in practice it is not clear how large each batch has to be.

There are many existing works that address (ii), but, roughly speaking, these approaches can be categorized into two groups. The first approach is to analyze the transition kernel of the Markov chain and construct a total variation distance bound between the transition density and the invariant density (see \cite{meyn1994computable,rosenthal1995minorization}, for example). Despite its theoretical soundness, this approach  often requires difficult or intractable  analysis. 

The simpler and more popular alternative is to examine the output of the Markov chain sampler. These approaches are known as  ``convergence diagnostics" in the literature, and include examining the decay of the sample autocorrelation plots \cite{polson2014bayesian} or running multiple chains until the chains roughly stay in the same region of the state space (for example in the popular Bayesian inference software WinBugs). These heuristics or rules-of-thumb mostly provide a pictorial convergence assessment and rarely a quantitative one. Indeed, Cowles et al. \cite{cowles1996markov} mention that ``...statisticians rely heavily on such diagnostics, if for no other reason than that a weak diagnostic is better than no diagnostic at all." 

 In this paper, we address both problems (i) and (ii)  by identifying the \emph{regenerative structure} in the output of the Park \& Casella Gibbs sampler.
 Regenerative simulation is a compromise between the two extremes above (analytical bounds and diagnostic plots) --- it
relies both on some preliminary analytical work and  on the output of the MCMC sampler.  Roughly speaking, given the Markov chain 
$\{\v X_0,\ldots,$ $\v X_t\}$, with invariant density $\pi$, the aim is to identify the times where the process stochastically `restarts' itself, thereby breaking the chain into iid segments.  Our novel approach  uses results from \cite{glynn1994some} to construct a total variation distance bound between the distribution of $\v X_t$ and the invariant density, and then  uses the (regenerative) iid output from the sampler to estimate the  unknown constants in this bound. 
In short, we demonstrate that a regenerative structure is all that is needed to address both (i) and (ii). We note that while the idea of using regeneration to address (i) goes back to \cite{mykland1995regeneration,jones2001honest},  these works do not address the important burn-in issue of (ii) via regeneration.  

In summary, our contribution is twofold: 1) to apply the regenerative method  \cite{mykland1995regeneration,jones2001honest} to the Park and Casella Gibbs sampler and address (i); and 2) to show how regenerative simulation can address the burn-in issue (ii) for any MCMC sampler, not just for the specific sampler of Park \& Casella.

The rest of the paper is structured as follows. In Section~\ref{sec:reg process}, we provide  background on regenerative simulation and then  discuss how regeneration can address the problem of MCMC burn-in, that is, issue (ii). Then, in Section~\ref{sec:nummelin} we show how regenerative simulation can be applied to the Park \& Casella Gibbs sampler. This is followed by a numeric section that uses two popular test cases, where we compare the regenerative estimators with the estimators based on the 
$AR(1)$ heuristic approximation. Finally, we draw conclusions on the benefits of regenerative simulation for addressing both issues (i) and (ii).

\section{Convergence Assessment for Regenerative Processes}
\label{sec:reg process}
Before presenting our novel contribution, we briefly summarize known facts about regenerative processes. 
 Recall that  $\{\v X_k, k=0,1,2,\ldots\}$ is said to be
a \emph{zero-delayed discrete-time  regenerative} process if there exist times
\[0=T_0 \leq T_1 \leq T_2 \leq T_3 \leq \ldots\]
such that $
\{\v X_{T_r+k} , 0\leq k \leq T_{r+1}-1\}
$
and
$\{\v X_{T_s+k} ,
0\leq k \geq T_{s+1}-1\}
$ are iid for all $r\not=s$. As a consequence, the lengths of the tours or cycles
\[
M_r=T_{r+1}-T_r,\qquad r=1,2,\ldots
\] 
are iid.   Suppose $h$ is a measurable function with $\bb E|h(\v X_k)|<\infty$ and 
\[
H_r:= \sum_{k=T_{r-1}}^{T_r-1} h(\v X_k)
\]
  Then, we know \cite{cinlar2013introduction} that $\v X_k$ converges in distribution
to a random variable $\v X\sim \pi$ such that
\[
\bb E_\pi h(\v X)=\frac{\bb E \sum_{k=0}^{T_1-1}h(\v X_k)}{\bb E M_1}=\frac{\bb E H_1}{\bb EM_1}
\]
We denote the distribution of this $\v X$ as $\pi$. It is the stationary distribution of the 
regenerative process. We also have \cite{zheng2016extensions}:
\[
\hat q_t:= \frac{1}{t}\sum_{k=0}^{t-1}h(\v X_k)\stackrel{\mathrm{a.s.}}{\longrightarrow} \bb E_\pi h=:q
\quad \textrm{and}\quad
\sqrt{t}(\hat q_t-q)\stackrel{\mathrm{d}}{\longrightarrow} \mathsf{N}(0,\gamma^2),
\]
where  $\gamma^2$ is the so-called \emph{time-average variance constant} (TAVC). In fact, the TAVC is asymptotically the same as the mean squared error of $\sqrt{t}\hat q_t$. Note that,  the regenerative process may or may not be Markovian. If it is Markovian, then we have a Markov chain with stationary and limiting distribution $\pi$.

\subsection{Regenerative Mean Square Error Estimator}
With a regenerative process, such as the above, 
it is well-known \cite{glynn1994some,zheng2016extensions} how to  estimate the TAVC using the ratio estimator:
\begin{equation}
\label{TAVC}
\hat\gamma^2_t=\frac{\sum_{r=1}^{N(t)}(H_r-\hat q_t M_r)^2}{t},
\end{equation}
where $N(t)=\max\{n: T_n\leq t\}$ is the number of regenerations
that have occurred after running the process for $t$ steps.

Arguably the simplest and most frequently used alternative to \eqref{TAVC} is the \emph{batch means estimator}. It is applied when the process under consideration is a Markov process and identifying the regeneration events is not possible.

 The batch means estimator divides a single run of a Markov chain, $\{\v X_1,\ldots, \v X_t\}$ into $n$ `batches' of $m$ adjacent observations (so that $t=m\times n$). Denoting the sample mean of the $m$ observations from the $k$-th `batch' by $\v{\tilde X}_k$,  the batch means variance estimator is given by \cite{law1979sequential}
\[\textstyle
\hat\gamma^2_\textrm{batch}=\frac{1}{n-1}\sum_{k=1}^n\left(\v{\tilde X}_k-\frac{1}{m\times n}\sum_{k=1}^{m\times n}\v X_k\right)^2.
\] 
The batch means variance estimator is motivated by the  fact that the dependence  between  adjacent batch means goes down to zero as $m\rightarrow\infty$ (see \cite{law1979sequential} for more details).  
For this reason, \cite{jones2001honest} views the batch means estimator as an ad-hoc variant  of  the regenerative estimator \eqref{TAVC}.

Unfortunately, ensuring the consistency of  $\hat\gamma^2_\textrm{batch}$ is nontrivial.  On the one hand \cite{damerdji1994strong} shows that if $n\rightarrow\infty$ and $m\rightarrow \infty$, then the batch variance estimator is consistent. On the other hand, \cite{glynn1991estimating} shows that for any fixed $n$ and $m\rightarrow\infty$, any  batch means estimator of a stationary quantity of interest is \emph{not} consistent.  Thus,  consistency requires that both $n$ and $m$ grow without bound.  In practice, however, the lack of independence between batches makes it very difficult to determine how large $n$ needs to be and how $m$ needs to grow as a function of  $n$.   In contrast, if regeneration events can be identified within the Markov chain, then the iid regenerative structure  ensures that such difficulties do not exist.

\subsection{Novel regenerative  burn-in estimator}
In this section, we explain how to estimate the total variation discrepancy of a Markov chain for which we can identify
its regenerative events. 

Recall that the total variation distance between $\kappa_t(\cdot|\v x_0)$, the $t$-th step transition kernel of a Markov chain starting at $\v x_0$, and the invariant density $\pi$ is defined by 
\begin{equation*}
\|\kappa_t(\cdot|\v x_0)-\pi\|_{\mathrm{TV}}=\sup_{A\in\mathscr B}|\kappa_t(A|\v x_0)-\pi(A)|,
\end{equation*}
where $\mathscr B$ is the Borel $\sigma$-algebra (and henceforth omitted from the notation). Also, if $\|\kappa_t(\cdot|\v x_0)-\pi\|_{\mathrm{TV}}\leq c_1\exp(-\varepsilon t)$ for some $\varepsilon>0$
and constant $c_1$ (possibly depending on $\v x_0$),
then the underlying Markov chain is said to be geometrically ergodic.  

Now, suppose that we initialize the chain from a random initial $\v X_0$ drawn from some density $\pi_0$. Then, the $t$-step transition kernel is obtained by taking the expectation with respect to $\v X_0$, namely,
$\bb E[\kappa_t(A\gvn\v X_0)]$.
 We define the $\epsilon$-\emph{burn-in} of a Markov chain with transition kernel $\kappa$ as the smallest $t$ for which
$
\|\bb E[\kappa_t(\cdot\gvn\v X_0)]-\pi\|_{\mathrm{TV}}<\epsilon
$, that is:
\[
t_\textrm{b}:=\min\{t:\|\bb E[\kappa_t(\cdot\gvn\v X_0)]-\pi\|_{\mathrm{TV}}<\epsilon\}
\]
Hence, a theoretically sound assessment of convergence, is to construct an estimate of (or a bound for) $\|\bb E[\kappa_t(\cdot\gvn\v X_0)]-\pi\|_{\mathrm{TV}}$, and  examine how fast it decays with respect to $t$. 
Since a simple analytical formula is too difficult to derive, practitioners turn to heuristics such as examining the  autocorrelation plots (mentioned in the introduction) or experimenting with the Markov chain using different starting values, $\v X_0$. 

Instead, we adopt a more theoretically sound approach that is a compromise between the extremes of an exact theoretical bound and an heuristic diagnostic plot.
Our key insight is that the bias properties of regenerative estimators \cite{glynn1994some}  allow us to bound the total variation distance, as follows.
\begin{theorem}[Total Variation Bound for MCMC]
\label{th:converge}
Let $\kappa_t(\cdot|\v X_0)$ with $\v X_0\sim\pi_0$ be the $t$-step transition kernel of a geometrically ergodic Markov chain  with  invariant density $\pi$. 
Suppose we can identify  regenerative times of the Markov chain and  assume that $\v X_0\sim \pi_0$ initialized a new regenerative cycle for simplicity.  Then, we have (for some constant $\varepsilon>0$)
\[
\|\bb E[\kappa_t(\cdot|\v X_0)]-\pi\|_\mathrm{TV}\leq \frac{\eta}{t+1}+\c O(\exp(-\varepsilon t)),
\]
where $\eta=\frac{\bb E M^2_1-\bb E M_1}{2\bb E M_1}$ with  $M_1,M_2,\ldots$
denoting the iid regenerative cycles.
\end{theorem}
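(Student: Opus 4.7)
The plan is to decompose the partial sums $\sum_{k=0}^{t}\mathbf{1}_A(\v X_k)$ along the iid regenerative cycles, apply Wald's identity to pull out the stationary expectation $\pi(A)$, and then control the remaining ``edge'' term using a quantitative form of the discrete-time key renewal theorem. Taking the supremum over Borel sets $A$ at the end yields the stated total-variation bound.

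Fix a measurable $A$, and set $H_r^{(A)}:=\sum_{k=T_{r-1}}^{T_r-1}\mathbf{1}_A(\v X_k)$ for the iid per-cycle counts; the representation of $\pi$ from Section~\ref{sec:reg process} gives $\bb E H_1^{(A)} = \bb E M_1\cdot\pi(A)$. Let $N(t)+1$ denote the (stopping-time) index of the cycle straddling $t$, so that
\[
\textstyle\sum_{k=0}^{t}\mathbf{1}_A(\v X_k) \;=\; \sum_{r=1}^{N(t)+1}H_r^{(A)} \;-\; \bar R_t^{(A)},\qquad \bar R_t^{(A)}:=\sum_{k=t+1}^{T_{N(t)+1}-1}\mathbf{1}_A(\v X_k).
\]
Applying Wald's identity to both $\sum_r H_r^{(A)}$ and $\sum_r M_r = T_{N(t)+1}$, and writing $F_t := T_{N(t)+1}-t\ge 1$ for the forward recurrence time so that $\bb E T_{N(t)+1} = t+\bb E F_t$, I rearrange to
\[
\bb E\Big[\textstyle\sum_{k=0}^{t}\mathbf{1}_A(\v X_k)\Big] \;-\; (t+1)\pi(A) \;=\; (\bb E F_t - 1)\,\pi(A) \;-\; \bb E\bar R_t^{(A)}.
\]
Both quantities on the right are nonnegative and each is at most $\bb E F_t-1$ (since $\pi(A)\leq 1$ and $\bar R_t^{(A)} \leq F_t-1$ almost surely), so the absolute value of their difference is at most $\bb E F_t-1$. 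Geometric ergodicity of the chain propagates via the Nummelin splitting underlying the regenerations to a geometric tail for $M_1$, so the discrete-time key renewal theorem with rate yields $\bb E F_t = \bb E F^* + \c O(\exp(-\varepsilon t))$; a direct summation identifies $\bb E F^* = \bb E[M_1(M_1+1)]/(2\bb E M_1) = \eta + 1$. Dividing through by $t+1$ and taking the supremum over $A$ then delivers the claim.

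The main obstacle is the quantitative renewal step: transferring geometric ergodicity of the chain to a geometric tail on the regenerative cycle length $M_1$ (through the splitting construction that is already invoked to locate the regenerations), and then extracting an explicit exponential rate of convergence in the discrete-time key renewal theorem for $\bb E F_t \to \bb E F^*$. Once this ingredient is in hand, the rest of the argument is an application of Wald's identity plus the pointwise inequality $\bar R_t^{(A)} \leq F_t-1$, i.e., elementary bookkeeping.
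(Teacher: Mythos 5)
Your argument, as written, bounds the wrong quantity. The sum $\frac{1}{t+1}\bb E\sum_{k=0}^{t}\mathbf{1}_A(\v X_k)=\frac{1}{t+1}\sum_{k=0}^{t}\bb P[\v X_k\in A]$ is the Ces\`aro-averaged distribution $\bb Q_t[A]$ of a state drawn uniformly from the history $\v X_0,\ldots,\v X_t$, whereas the theorem asserts a bound on $\sup_A|\bb P[\v X_t\in A]-\pi(A)|$, the total variation distance of the marginal law of the \emph{last} state. Your final step (``dividing through by $t+1$ and taking the supremum over $A$'') therefore delivers $\|\bb Q_t-\pi\|_{\mathrm{TV}}\leq \eta/(t+1)+\c O(\exp(-\varepsilon t))$, not the stated claim. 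The missing ingredient is precisely the paper's second lemma: $\|\bb E[\kappa_t(\cdot\gvn\v X_0)]-\pi\|_{\mathrm{TV}}\leq\|\bb Q_t-\pi\|_{\mathrm{TV}}$, i.e., the last state is at least as close to $\pi$ as a randomly chosen historical state. This is not a triviality --- the paper proves it via a maximal coupling of $\bb Q_t$ with $\pi$, Scheff\'e's identity $\|p-q\|_{\mathrm{TV}}=1-\int\min\{p,q\}$, the subadditivity of $x\mapsto\min\{x,1\}$, and the monotonicity of $s\mapsto\|\bb E[\kappa_s(\cdot\gvn\v X_0)]-\pi\|_{\mathrm{TV}}$. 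Without this step (or some substitute), your proof establishes a statement about the averaged chain only, and the proof is incomplete.

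That said, the part you do prove is correct and is a genuinely different, and arguably cleaner, route to the paper's Lemma~\ref{lem:uniform bias}. Where the paper sets up a renewal equation $e_A=v_A*u$ for each $A$ and invokes the key renewal theorem with exponential rate uniformly in $A$, you absorb all $A$-dependence into the elementary bounds $\pi(A)\leq 1$ and $\bar R_t^{(A)}\leq F_t-1$, so that the renewal theorem need only be applied to the single scalar quantity $\bb E F_t$ (the mean forward recurrence time), with $\bb E F^*=\bb E[M_1(M_1+1)]/(2\bb E M_1)=\eta+1$ computed directly. The Wald bookkeeping, the identity for $\bb E F^*$, and the resulting constant all check out. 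One caveat you share with the paper: the transfer of geometric ergodicity of the chain to a light tail for $M_1$ (needed for the exponential rate in the renewal theorem) is asserted rather than proved in both treatments, so I do not count it against you. If you append the time-average comparison lemma, your argument becomes a complete and somewhat more elementary proof of the theorem.
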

The proof is given in the Appendix.

A key insight from  the theorem above  is  that an asymptotic upper bound for the $\epsilon$-burn-in, $t_\textrm{b}$, is
$
\left\lceil \eta/\epsilon\right \rceil,
$
and that the constant $\eta$ can be estimated from simulation using the iid realizations of $M_1,M_2,\ldots$. 

In summary, ($T_n=M_1+\cdots+M_n$  and $N(t)=\max\{n: T_n\leq t\}$) our novel  estimator for the $\epsilon$-burn-in  is:
\begin{equation}
\label{eps-burn-in}
\left\lceil\frac{\sum_{k=1}^{N(t)}M_k^2-\sum_{k=1}^{N(t)}M_k}{2\epsilon\sum_{k=1}^{N(t)}M_k }\right \rceil\;.
\end{equation}
This estimator can admittedly be quite conservative as it relies on an upper bound of the total variation distance, not on the actual distance. 

The following table summarizes the current popular practice and our suggested alternative.
{\scriptsize
\begin{center}
\begin{tabular}{|c|c|c|}
\hline
\textbf{Issue} & (i) \emph{Estimate MSE} & (ii) \emph{Assessing the convergence}\\
\hline
\textbf{Theoretical approach} & Compute/Bound TAVC& Compute/bound the TV distance\\
\hline
\textbf{Popular  approach} & Batch-means estimator & Diagnostic plots\\
\hline
\textbf{Regenerative approach} &TAVC Estimator \eqref{TAVC} &  Bias Estimator \eqref{eps-burn-in}\\
\hline
\end{tabular}
\end{center}
}

In the next section, we apply the variance estimator \eqref{TAVC}
and the $\epsilon$-burn-in estimator \eqref{eps-burn-in} to the 
Gibbs sampler of Park \& Casella. Note that their sampler  is known  to be geometrically ergodic \cite{khare2013geometric}, so that the results in Theorem~\ref{th:converge} apply. 


\section{Regenerative Simulation for Park \& Casella Sampler}
\label{sec:nummelin}
In order to assess  the convergence  of the Park \& Casella sampler via the regenerative estimators \eqref{TAVC} and \eqref{eps-burn-in},
we first need  to identify the regeneration events in the output of the sampler. The most common method for identifying  regenerative structure in Markov chains is the  state-space augmentation method of Nummelin \& Mykland \cite{nummelin1984,mykland1995regeneration,jones2001honest}.

\subsection{Nummelin state-space augmentation}

To identify regenerative structure in a Markov chain with transition kernel $\kappa(\v x_{k+1}|\v x_k)$ and invariant density $\pi$, we first need to 
establish the so-called \emph{minorization condition}. Namely, 
 we seek  a probability measure $\nu$ and a function $s$ such that
\begin{equation}
\kappa(\v x_{k+1}|\v x_{k})\geq \nu(\di \v x_{k+1}) s(\v x_k),\quad \forall \v x_k.
\label{eq:minorization}
\end{equation}
Once \eqref{eq:minorization} is established, one can then simulate the Markov chain $\v X_1,\v X_2,\ldots$ via the mixture representation of $\kappa$:
\begin{equation}
\textstyle
\kappa(\v x_{k+1}|\v x_k)=s(\v x_k)\nu(\di \v x_{k+1})+(1-s(\v x_k))
\frac{\kappa(\v x_{k+1}|\v x_k)-\nu(\di \v x_{k+1}) s(\v x_k)}{1-s(\v x_k)}.
\label{eq:mixture}
\end{equation}
Thus, a regenerative structure arises in this process, because  $\{\v X_1,\ldots,\v X_{k}\}$ is independent of 
$\{\v X_{k+1},\v X_{k+2},\ldots\}$ whenever $\v X_{k+1}$ is simulated from the first component, $\nu(\di \v x_{k+1})$, of the mixture.

Simulation from the mixture components of  \eqref{eq:mixture}  may be difficult, if not impossible. Indeed, an important insight of \cite{mykland1995regeneration} is that one does not need to simulate from the mixture densities of \eqref{eq:mixture} directly. Instead, one can simulate from $\kappa(\v x_{k+1}|\v x_k)$ in the usual manner, and identify regeneration times \emph{retrospectively}. To be precise, given $\v x_k$, the $k$-th realization, we can simulate $\v X_{k+1}$ from $\kappa(\v x_{k+1}|\v x_k)$ and decide that regeneration
has occurred with retrospective probability:
\[
\psi_k:=\bb P[\textrm{ regeneration at } k \;|\,\v X_k,\v X_{k+1}]=\frac{s(\v X_k)\nu(\v X_{k+1})}{\kappa(\v X_{k+1}|\v X_k)}.
\]
That is to say, if one wishes to incorporate regeneration in a geometrically ergodic MCMC sampler, one  proceeds as follows.
\begin{enumerate}
 \item Establish \eqref{eq:minorization} for the transition density of the MCMC sampler.
 \item  Simulate the Markov chain $\{\v X_1,\v X_2,\ldots,\v X_t\}$ as usual (e.g., running the Gibbs sampler of Park \& Casella), starting from $\v X_0$.
 \item For $k\in\{1,\ldots,t-1\}$, simulate a Bernoulli random variable  with success probability  $\psi_k$ to decide whether regeneration has occurred.
\end{enumerate}
In the next section we establish the minorization condition for the Gibbs sampler of Park \& Casella and provide a formula for $\psi_k$.
In this way, we will have all the ingredients to run the above algorithm.

\subsection{Application to Park \& Casella sampler}
Given the response variable $\v Y$ and model matrix $\m X$,
the hierarchical formulation of Bayesian Lasso linear regression model is as follows (here $\v\beta,\sigma$ are model parameters and $\lambda$  is the Lasso regularization parameter):
\[
\begin{split}
\beta_j|\lambda&\overset{i.i.d}{\sim}\mathsf{Laplace}(0,1/\lambda),\quad\mbox{for }j\in\{1,\ldots,p\}\\
\v Y|(\v\beta,\lambda,\sigma^2)&\sim\mathsf{N}(\m X\v\beta,\sigma^2\m I).
\end{split}
\]
It follows that inference for the Bayesian Lasso linear regression requires one to take expectations with
respect to the posterior density  (for  simplicity of notation we drop $\v y$)
\begin{equation}
 \pi(\v\beta|\lambda,\sigma^2)=\frac{(\lambda/2)^p\exp\left(-\frac{1}{2\sigma^2}\|\v y-\m X\v\beta\|_2^2-\lambda\|\v\beta\|_1\right)}{\ell(\lambda,\sigma^2)},
\label{eq:posterior}
\end{equation}
where $\ell(\lambda,\sigma^2):=\int (\lambda/2)^p\exp\left(-\frac{1}{2\sigma^2}\|\v y-\m X\v\beta\|_2^2-\lambda\|\v\beta\|_1\right) \di \v\beta$
is the marginal likelihood of the pair $(\lambda,\sigma^2)$.

Recall (see Appendix~\ref{app:BLasso} for details or \cite{Park2008bayesian}) that the transition density for the Gibbs sampler of Park \& Casella is
\[
\kappa(\underbrace{(\v\beta_{k+1},\v \tau_{k+1})}_{\v x_{k+1}}|\underbrace{(\v \beta_k,\v\tau_k)}_{\v x_k})=\pi(\v\tau_{k+1}|\v\beta_{k})\pi(\v\beta_{k+1}|\v\tau_{k+1}),
\]
where 
$
\pi(\v\tau_{k+1}|\v\beta_{k})
$ is the joint density of independent
$\mathsf{Wald}(\lambda',\mu_j')$ random variables with  $\lambda'=\lambda^2$ and $\mu_j'=\lambda/|\beta_j|$ (see, for example, \cite{chhikara1988}) and $\pi(\v\beta_{k+1}|\v\tau_{k+1})$ is the density of the multivariate $\mathsf{N}(\m A\m X^\top \v y, \sigma^2\m A)$ distribution, where $\m A:=\m X^\top\m X+\mathrm{diag}(\v\tau)$. We have the following lemma whose proof is  in the Appendix.
\begin{lemma}[Regenerative conditions for  Park \& Casella  sampler]~\\
\label{lem:minorization}
 Let $\hat{\v \beta}$ be the solution to the frequentist Lasso penalized regression model:
\[
\hat{\v \beta}=\argmin_{\v\beta}\{ \|\v y-\m X\v\beta\|^2_2+\lambda\|\v\beta\|_1\}
\]
 and let $\c D=\bb R^p\times [\v c,\v d]$ be a  subset of $\bb R^p\times\bb R_+^p$, the state space on which $(\v\beta,\v\tau)$ is defined. Define the probability measure $\nu(\v\beta_{k+1},\v\tau_{k+1})$: 
\begin{equation}
\label{fresh reg.}
\nu(\v\beta_{k+1},\v\tau_{k+1})=\varepsilon^{-1}\kappa((\v\beta_{k+1},\v \tau_{k+1})|(\hat{\v\beta},\v 1))\times\bb I\{(\v\beta_{k+1},\v\tau_{k+1})\in \c D\}\;,
\end{equation}
where $\v 1\in \bb R^p$ is the vector of ones and
$\varepsilon$ is the normalizing constant for $\nu$.
Let the notation $\v a_+$ mean setting all negative entries of the vector $\v a$ to zero, and similarly $\v a_-$ sets all positive entries of $\v a$ zero ($\v a^{2}$ means squaring each entry). Then, the measure $\nu$ and the function:
\[\textstyle
s(\v\beta_k,\v\tau_k)=\varepsilon\exp\left(-\frac{1}{2}\v d^\top\v(\v\beta_k^2-\hat{\v\beta}^2)_+-\frac{1}{2}\v c^\top\v(\v\beta_k^2-\hat{\v\beta}^2)_-\right)
\]
satisfy the minorization condition:
\[
\kappa((\v\beta_{k+1},\v \tau_{k+1})|(\v \beta_k,\v\tau_k))\geq\nu(\v\beta_{k+1},\v\tau_{k+1})s(\v\beta_k,\v\tau_k),\qquad\forall (\v\beta_k,\v\tau_k).
\]
Conditional on the simulated states $(\v\beta_{k},\v\tau_k)$ and
$(\v\beta_{k+1},\v\tau_{k+1})$,
the probability that a regeneration at the $k$-th step has occurred is:
\begin{equation}
\begin{split}
	\psi_k&=\textstyle\exp\left(-\frac{(\v d-\v\tau_{k+1})^\top(\v\beta_k^2-\hat{\v\beta}^2)_+}{2}-\frac{(\v c-\v\tau_{k+1})^\top(\v\beta_k^2-\hat{\v\beta}^2)_-}{2}\right)\times\\
	&\qquad\times\bb I\left\{\v\tau_{k+1}\in[\v c,\v d]\right\}
	\end{split}
\label{eq:RegProb}
\end{equation}
\end{lemma}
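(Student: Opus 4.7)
The plan is to exploit the product structure of the Park \& Casella Gibbs kernel so that the minorization problem reduces to bounding a ratio of Wald densities. Since
\[
\kappa\bigl((\v\beta_{k+1},\v\tau_{k+1})\gvn(\v\beta_k,\v\tau_k)\bigr)=\pi(\v\tau_{k+1}\gvn\v\beta_k)\,\pi(\v\beta_{k+1}\gvn\v\tau_{k+1}),
\]
and since the candidate $\nu$ in \eqref{fresh reg.} contains exactly the same factor $\pi(\v\beta_{k+1}\gvn\v\tau_{k+1})$, the conditional on $\v\beta_{k+1}$ cancels on both sides of the minorization. Consequently, establishing the inequality reduces to checking that, uniformly over $\v\tau_{k+1}\in[\v c,\v d]$,
\[
\frac{\pi(\v\tau_{k+1}\gvn\v\beta_k)}{\pi(\v\tau_{k+1}\gvn\hat{\v\beta})}\;\geq\;\frac{s(\v\beta_k,\v\tau_k)}{\varepsilon},
\]
which is a pointwise statement about one-dimensional Wald densities.

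Next, I would write the Wald density explicitly. Using the identity
\[
\frac{\lambda^{2}(\tau_j-\lambda/|\beta_j|)^{2}}{2(\lambda/|\beta_j|)^{2}\tau_j}=\frac{\beta_j^{2}\tau_j}{2}-\lambda|\beta_j|+\frac{\lambda^{2}}{2\tau_j},
\]
the $\tau_j$-dependent part of the log-density is affine in $\tau_j$ up to the shared $\tau_j^{-3/2}$ and $\lambda^2/(2\tau_j)$ terms, so the factors that survive in the ratio $\pi(\tau_{k+1,j}\gvn\beta_{k,j})/\pi(\tau_{k+1,j}\gvn\hat\beta_j)$ depend on $\tau_{k+1,j}$ only through the term $-\tfrac{1}{2}(\beta_{k,j}^{2}-\hat\beta_j^{2})\,\tau_{k+1,j}$. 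Because this is linear in $\tau_{k+1,j}$, its infimum on $[c_j,d_j]$ is attained at an endpoint whose identity is dictated by the sign of $\beta_{k,j}^{2}-\hat\beta_j^{2}$: at $d_j$ when the sign is positive and at $c_j$ when it is negative. Collecting the coordinatewise minima produces precisely the factor
\[
\exp\!\Bigl(-\tfrac{1}{2}\v d^{\top}(\v\beta_k^{2}-\hat{\v\beta}^{2})_{+}-\tfrac{1}{2}\v c^{\top}(\v\beta_k^{2}-\hat{\v\beta}^{2})_{-}\Bigr),
\]
which matches the stated $s(\v\beta_k,\v\tau_k)/\varepsilon$, and proves the minorization after restoring the common $\pi(\v\beta_{k+1}\gvn\v\tau_{k+1})$ factor and the indicator supplied by $\nu$.

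For the regeneration probability, I would apply the standard retrospective formula
\[
\psi_k=\frac{s(\v\beta_k,\v\tau_k)\,\nu(\v\beta_{k+1},\v\tau_{k+1})}{\kappa\bigl((\v\beta_{k+1},\v\tau_{k+1})\gvn(\v\beta_k,\v\tau_k)\bigr)}
=\frac{s(\v\beta_k,\v\tau_k)}{\varepsilon}\cdot\frac{\pi(\v\tau_{k+1}\gvn\hat{\v\beta})}{\pi(\v\tau_{k+1}\gvn\v\beta_k)}\,\bb I\{\v\tau_{k+1}\in[\v c,\v d]\},
\]
which follows from the mixture representation \eqref{eq:mixture}. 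Substituting the explicit forms of $s$ and of the Wald ratio, and using $(\v\beta_k^{2}-\hat{\v\beta}^{2})=(\v\beta_k^{2}-\hat{\v\beta}^{2})_{+}+(\v\beta_k^{2}-\hat{\v\beta}^{2})_{-}$, the terms $-\tfrac{1}{2}\v d^{\top}(\cdot)_{+}$ and $+\tfrac{1}{2}\v\tau_{k+1}^{\top}(\cdot)_{+}$ combine into $-\tfrac{1}{2}(\v d-\v\tau_{k+1})^{\top}(\cdot)_{+}$, and analogously for the negative part. This yields the stated formula \eqref{eq:RegProb}.

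The main obstacle, in my view, is the sign bookkeeping with $(\cdot)_+$ and $(\cdot)_-$ when minimizing the log-ratio coordinatewise: the minimizing endpoint depends on the sign of $\beta_{k,j}^{2}-\hat\beta_j^{2}$, and one must verify that writing the result via $\v d^{\top}(\v\beta_k^{2}-\hat{\v\beta}^{2})_{+}+\v c^{\top}(\v\beta_k^{2}-\hat{\v\beta}^{2})_{-}$ correctly encodes both cases simultaneously; once that identity is set down, the rest is essentially algebraic manipulation. A secondary subtlety is ensuring that the terms not depending on $\v\tau_{k+1}$ in the Wald log-ratio are handled consistently between the minorization step and the $\psi_k$ simplification so that the two derivations produce the same $s$ and $\nu$.
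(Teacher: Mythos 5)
Your proposal is correct and follows essentially the same route as the paper's proof: cancel the common factor $\pi(\v\beta_{k+1}\gvn\v\tau_{k+1})$, take the coordinatewise infimum of the (linear-in-$\v\tau_{k+1}$) Wald log-ratio over $[\v c,\v d]$ at the endpoint determined by the sign of $\beta_{k,j}^2-\hat\beta_j^2$, and recombine via the retrospective formula $\psi_k=s\,\nu/\kappa$. The ``secondary subtlety'' you flag --- the $\v\tau_{k+1}$-independent factor $\exp(\lambda(|\beta_{k,j}|-|\hat\beta_j|))$ in the Wald ratio --- is silently dropped in the paper's own derivation as well; it cancels in the final expression for $\psi_k$, which is the only quantity the algorithm actually uses.
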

To start the Markov chain with a fresh regenerative cycle, we need only  simulate an initial state  from $\nu(\v\beta,\tau)$ in \eqref{fresh reg.} above.
Now, we have all the ingredients for identifying regeneration events during the course of running the Gibbs sampling of Park \& Casella.


\subsection{Practical tuning of algorithm}

Our simulation experience suggests  that it does pay off to put some effort in optimizing the probability of regeneration with respect to $\v c,\v d\in\bb R_+^p$. Clearly, as the volume of the hyper rectangle $[\v c,\v d]$ in \eqref{eq:RegProb} decreases, the probability of observing $\v\tau\in [\v c,\v d]$ shrinks to zero. However, if one makes $[\v c,\v d]$ too large, then the exponential term approaches zero, that is, the probability of observing a regeneration again shrinks to zero. This suggests that  we can search for the $[\v c^*,\v d^*]$ that yield the optimal tradeoff between these two antagonistic conditions.

Ideally one should solve the optimization program 
\[
(\v c^*,\v d^*)=\argmax_{\v c , \v d}\bb E [\bb P[\textrm{ regeneration at } k \;|\,\v X_k,\v X_{k+1}]],
\]
where the expectation is with respect to  a pair $\v X_k,\v X_{k+1}$
that  is in stationarity (that is, $\v X_k\sim\pi$
and $\v X_{k+1}\sim\kappa(\v X_{k+1}|\v X_k)$). 
\begin{figure}[htb]
\begin{center}
\includegraphics[scale = 0.54]{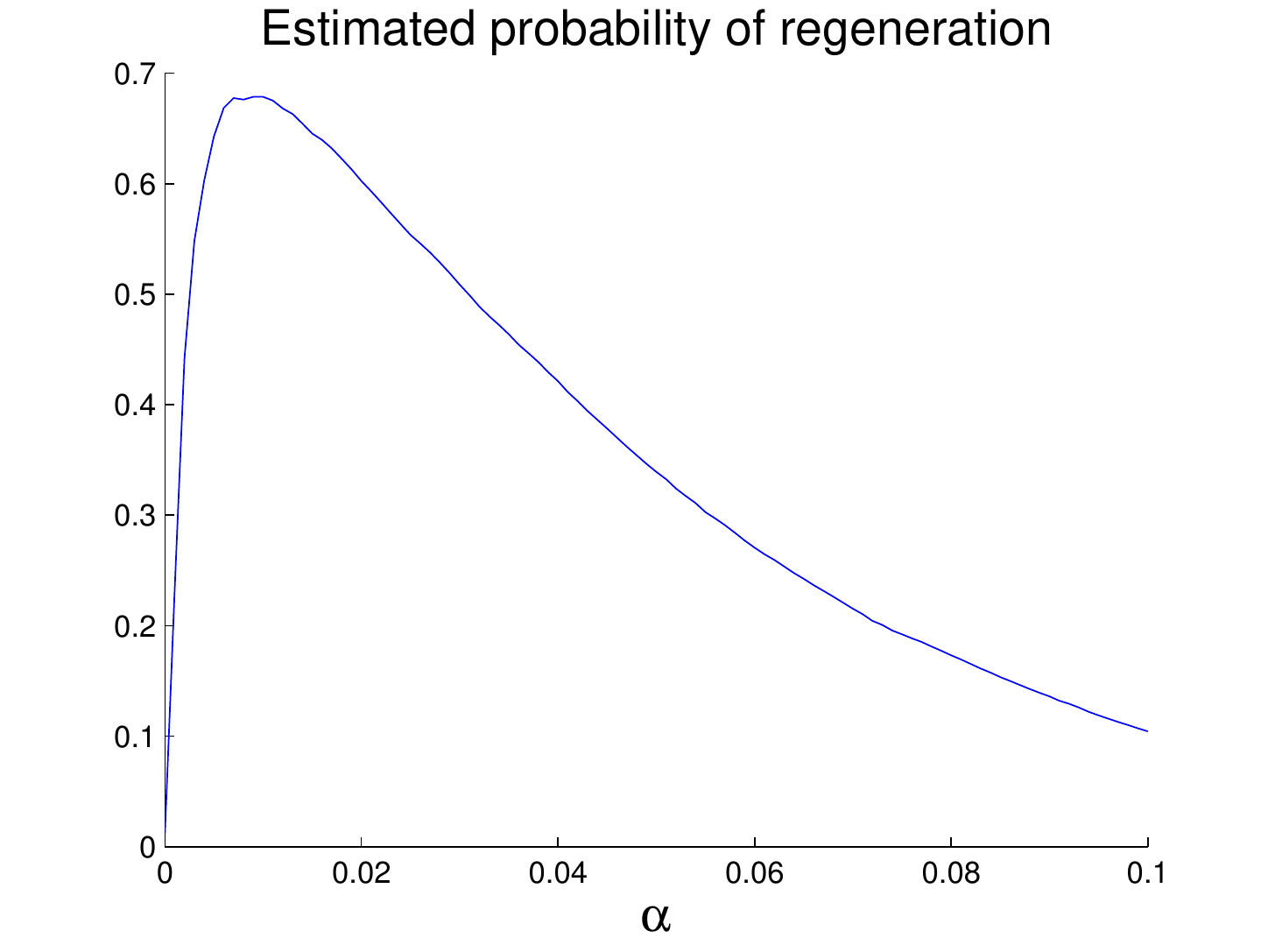}
\caption{Grid search to maximize the number of regenerations for the diabetes dataset. Here the empirical Bayes estimator for the pair $(\lambda,\sigma)$ is $(0.00431,53.5)$. We can see a clearly pronounced maximum at $\alpha\approx 0.01$.}
\label{fig:GridSearch}
\end{center}
\end{figure}

There are two difficulties here.
First, solving the program analytically is impossible. Second, the integration is $2p$-dimensional. 

Our solution to the first difficulty is to first simplify the optimization to a univariate optimization in terms of a single variable $\alpha$. More precisely, to simplify the grid search optimization, we let $\alpha\in[0,1]$ and for each $j$, we denote the lower and the upper $\alpha$-quantile for $\tau_j$ by $c_{j,(\alpha)}$ and $d_{j,(\alpha)}$ respectively. 
Approximately, we have 
\[
\bb P[\tau_j<c_{j,(\alpha)}]=\bb P[\tau_j>d_{j,(\alpha)}]=\alpha
\]
Thus, instead of solving the optimization program for a general $(\v c^*,\v d^*)$, we solve the univariate program 
\[
\alpha^*=\argmax_{\alpha}\bb E [\bb P[\textrm{ regeneration at } k \;|\,\v X_k,\v X_{k+1}]]
\]
 and use $[\v c_{(\alpha^*)},\v d_{(\alpha^*)}]$.

For the second difficulty, note that we already have access to a Gibbs sampler which can generate the sample paths quickly. 
Thus, to perform the optimization for $\alpha$, we run a pilot  of the Gibbs sampler to obtain an approximate empirical distribution for many pairs $(\v X_k,\v X_{k+1})$. We then use a grid search to maximize the estimated probability of regeneration with respect to $\alpha$. 
The procedure is summarized in the following pseudo-code.
\begin{algorithm}[H]
	\caption{: Grid search optimization for $\hat\alpha^*$}
	\begin{algorithmic}
	\REQUIRE {solution to the frequentist Lasso $\hat{\v\beta}$, grid $\alpha=(\alpha^{(1)},\alpha^{(2)},\ldots,\alpha^{(q)})$, number of samples in the pilot run $t$}
	\STATE{Obtain empirical distribution $(\v\beta_1,\v\tau_1),\ldots,(\v\beta_t,\v\tau_t)$ from the Gibbs sampler.}
	\STATE{Approximate the empirical marginal distribution for $\tau_j$ by the ordered statistics $(\tau_{j,(1)},\ldots,\tau_{j,(t)})$ for each $j\in\{1,\ldots,p\}$}
	\FOR{$i\in\{1,\ldots,q\}$}
	\STATE{Approximate $c_{j,(\alpha^{(i)})}$, and $d_{j,(\alpha^{(i)})}$ from the empirical marginal distribution for each $j$}
	\STATE{Compute $\psi_k^{(i)}$ using $c_{j,(\alpha^{(k)})}$ and $d_{j,(\alpha^{(k)})}$ for $k\in\{1,\ldots,t-1\}$ using \eqref{eq:RegProb}}
	\STATE{$\hat{\psi}^{(i)}\leftarrow\frac{1}{t-1}\sum_{k=1}^{t-1}\psi_k^{(i)}$}
	\ENDFOR
	\STATE{Choose $i^*$ such that $\psi^{(i)}$ is maximum}
	\RETURN{Return $\hat\alpha^*\leftarrow \alpha^{(i^*)}$}
	\end{algorithmic}
	\label{alg:GridSearch}
\end{algorithm}
Figure~\ref{fig:GridSearch} shows the result of the univariate grid search for the diabetes example considered in  Section~\ref{sec:numerics}.

It is important to note that the above optimization does not improve the convergence of the sampler, but simply helps identify more regenerative events (which occur even when they go unidentified). Identifying more regenerations only allows us to quantify the error in the MCMC estimate more accurately, but does nothing to speed up the convergence.

\section{Numeric Examples}
\label{sec:numerics}
To validate the regenerative 
results, we will use an $AR(1)$ process as a  heuristic approximation to the Markov chain output.
Recall that an $AR(1)$ process is given by:
\[
Y_{t+1} = c+ \rho Y_t+\varepsilon_t,\quad t=0,1,2,\ldots
\]
where $\varepsilon\sim\mathsf{N}( 0,\sigma_\varepsilon^2)$. Suppose the process starts at some initial state $Y_0=y_0$  and $|\rho|<1$. Then, the formula for the mean and
 variance  is
\[
\mu_t=c\frac{1-\rho^t}{1-\rho}+\rho^t y_0, \quad \mathrm{Var}(Y_t)=\sigma^2_\varepsilon\frac{1-\rho^{2t}}{1-\rho^2}
\]
Thus, the stationary distribution of the $AR(1)$
process is a Gaussian with mean and variance, $\mu=\lim_{t\uparrow\infty}\mu_t=c/(1-\rho)$ and $\sigma^2=\lim_{t\uparrow\infty}\mathrm{Var}(Y_t)=\frac{\sigma_\varepsilon^2}{1-\rho^2}$, respectively. 
In other words, once we have estimates for the
$AR(1)$ model parameters,  we can upper bound the total variation distance as follows.

\begin{lemma}[$AR(1)$ Bounds on Total Variation]
\label{lem:bnds}
Assuming that the output of the Markov chain follows the $AR(1)$ model above, we have the two bounds:
\[
\begin{split}
\sup_A|\bb P[Y_t\in A]-\bb P[Y_\infty\in A]|&\leq\\
 \textstyle\leq\sqrt{2-2\sqrt{\frac{2\sqrt{1-\rho^{2t}}}{2-\rho^{2t}}}\exp\left(-\frac{\left(c\frac{1-\rho^t}{1-\rho}+\rho^ty_0-\frac{c}{1-\rho}\right)^2}{4\sigma_\varepsilon^2\left(\frac{1-\rho^{2t}}{1-\rho^2}+\frac{1}{1-\rho^2}\right)}\right)},&\\
\sup_A|\bb P[Y_t\in A]-\bb P[Y_\infty\in A]|&\leq\\
 \textstyle\leq\frac{1}{2}\sqrt{\rho^{2t}\frac{(y_0-c/(1-\rho))^2}{\sigma_{\epsilon}^2/(1-\rho^2)}-\rho^{2t}-\ln(1-\rho^{2t})},&
\end{split}
\]
where the first one is derived using  the Hellinger distance and the second one is derived using  the Kullback-Leibler (KL) distance. 
\end{lemma}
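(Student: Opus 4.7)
The plan is to recognize that the distributions in question are both Gaussian (since $Y_t$ is an affine combination of the independent Gaussian innovations $\varepsilon_0,\ldots,\varepsilon_{t-1}$ plus a deterministic term) and then apply two classical inequalities that relate total variation distance to other statistical divergences which admit closed-form expressions for Gaussians. Specifically, I will use the Hellinger--TV inequality $\|P-Q\|_{\mathrm{TV}}\le\sqrt{2-2\,\mathrm{BC}(P,Q)}$, where $\mathrm{BC}$ is the Bhattacharyya coefficient, for the first bound, and Pinsker's inequality $\|P-Q\|_{\mathrm{TV}}\le\sqrt{\tfrac12 D_{\mathrm{KL}}(P\|Q)}$ for the second bound.

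First I would recall the well known closed-form expressions for two univariate Gaussians $P=\mathsf{N}(\mu_1,\sigma_1^2)$ and $Q=\mathsf{N}(\mu_2,\sigma_2^2)$:
\[
\mathrm{BC}(P,Q)=\sqrt{\frac{2\sigma_1\sigma_2}{\sigma_1^2+\sigma_2^2}}\exp\!\left(-\frac{(\mu_1-\mu_2)^2}{4(\sigma_1^2+\sigma_2^2)}\right),
\]
\[
D_{\mathrm{KL}}(P\|Q)=\log\frac{\sigma_2}{\sigma_1}+\frac{\sigma_1^2+(\mu_1-\mu_2)^2}{2\sigma_2^2}-\frac12.
\]
These are standard and can simply be cited.

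Next, I would substitute $P=\mathsf{N}(\mu_t,\sigma_\varepsilon^2(1-\rho^{2t})/(1-\rho^2))$ and $Q=\mathsf{N}(c/(1-\rho),\sigma_\varepsilon^2/(1-\rho^2))$. For the first bound, the variance ratio simplifies to
\[
\frac{2\sigma_1\sigma_2}{\sigma_1^2+\sigma_2^2}=\frac{2\sqrt{1-\rho^{2t}}}{2-\rho^{2t}},
\]
and $\sigma_1^2+\sigma_2^2=\sigma_\varepsilon^2\big((1-\rho^{2t})/(1-\rho^2)+1/(1-\rho^2)\big)$, while the mean difference $\mu_1-\mu_2=c(1-\rho^t)/(1-\rho)+\rho^t y_0-c/(1-\rho)$ matches the exponent in the statement. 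Plugging these into $\sqrt{2-2\,\mathrm{BC}}$ yields the first inequality.

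For the second bound, using $\sigma_1^2/\sigma_2^2=1-\rho^{2t}$ gives $\log(\sigma_2/\sigma_1)=-\tfrac12\log(1-\rho^{2t})$ and $\sigma_1^2/(2\sigma_2^2)-\tfrac12=-\rho^{2t}/2$. The mean term simplifies cleanly because $\mu_t-c/(1-\rho)=\rho^t(y_0-c/(1-\rho))$, giving
\[
D_{\mathrm{KL}}=\frac12\!\left[\rho^{2t}\frac{(y_0-c/(1-\rho))^2}{\sigma_\varepsilon^2/(1-\rho^2)}-\rho^{2t}-\log(1-\rho^{2t})\right].
\]
Pinsker's bound $\sqrt{D_{\mathrm{KL}}/2}$ then produces the displayed factor of $1/2$ outside the radical and delivers the second inequality. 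There is no real obstacle here beyond careful bookkeeping of the algebraic simplifications; the only subtlety is that $|\rho|<1$ (and the assumption that the process starts from a deterministic $y_0$) guarantees $1-\rho^{2t}>0$, so all the expressions under the logarithms and square roots are well defined.
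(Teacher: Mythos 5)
Your proof is correct and follows essentially the same route as the paper: the first bound via the total-variation--Hellinger inequality together with the closed-form Bhattacharyya coefficient for two univariate Gaussians, and the second via Pinsker's inequality with the closed-form Gaussian KL divergence. Your explicit algebraic verifications (in particular $2\sigma_1\sigma_2/(\sigma_1^2+\sigma_2^2)=2\sqrt{1-\rho^{2t}}/(2-\rho^{2t})$ and $\mu_t-c/(1-\rho)=\rho^t(y_0-c/(1-\rho))$) are accurate and in fact supply detail the paper omits.
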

In the following examples, we use estimates of $c,\rho$ for each $\beta_j,j=1,\ldots,p$ and plug them into the $AR(1)$-based upper bounds. 
We then use the largest of these $p$-estimates as 
a heuristic approximation of the true total variation distance of the Gibbs sampler. In a sense, this is equivalent to picking the autocorrelation plot that appears to decay at the slowest rate.  

\subsection{Diabetes dataset}
\label{sec:diabetes}
We present the result of our numerical study on the diabetes dataset of \cite{Efron2004}. The dataset consists of $10$ predictor variables (age, sex, BMI, etc.) and a response variable which is a medical measurement for the level of diabetes for $n=442$ patients. We model the variables using the Bayesian Lasso linear regression, and apply the regenerative Gibbs sampler to sample from the posterior distribution. 

\paragraph{Regenerative variance estimator \eqref{TAVC}.} The result is given in Table~\ref{fig:NumericResult}, in which we run the sampler to generate $5000$ samples and observed $3369$ regenerations. We also compare our regenerative estimator \eqref{TAVC} with the $AR(1)$-based estimates for the standard errors\cite{law1984confidence1,Gilks1998adaptive}.
\begin{table}[htb]
\begin{center}
\caption{A comparison of the standard error estimators for the diabetes dataset.}
\begin{scriptsize}
\begin{tabular}{c|c|c|c|c|c|}  
\cline{2 - 6}
& Mean & Standard error& $AR(1)$ st. err. & Relative error & $AR(1)$ rel. err.\\
\cline{2 - 6}
age &\num{-2.8513e+00} & \num{7.4526e-01} & \num{7.4691e-01} & \num{2.6145e-01} & \num{2.6203e-01} \\ 
\cline{2 - 6}
sex &\num{-2.1353e+02} & \num{9.0418e-01} & \num{9.0868e-01} & \num{4.2342e-03} & \num{4.2553e-03} \\ 
\cline{2 - 6}
bmi &\num{5.2416e+02} & \num{9.6080e-01} & \num{9.8536e-01} & \num{1.8331e-03} & \num{1.8799e-03} \\ 
\cline{2 - 6}
map &\num{3.0662e+02} & \num{9.3646e-01} & \num{9.8428e-01} & \num{3.0540e-03} & \num{3.2100e-03} \\ 
\cline{2 - 6}
tc  &\num{-1.9168e+02} & \num{2.9751e+00} & \num{3.4429e+00} & \num{1.5527e-02} & \num{1.7969e-02} \\ 
\cline{2 - 6}
ldl &\num{8.4696e+00} & \num{2.4224e+00} & \num{2.6600e+00} & \num{2.8794e-01} & \num{3.1618e-01} \\ 
\cline{2 - 6}
hdl &\num{-1.5009e+02} & \num{1.8383e+00} & \num{2.1300e+00} & \num{1.2246e-02} & \num{1.4190e-02} \\ 
\cline{2 - 6}
tch &\num{1.0027e+02} & \num{1.8644e+00} & \num{2.0679e+00} & \num{1.8594e-02} & \num{2.0624e-02} \\ 
\cline{2 - 6}
ltg &\num{5.2609e+02} & \num{1.5830e+00} & \num{1.7437e+00} & \num{3.0091e-03} & \num{3.3146e-03} \\ 
\cline{2 - 6}
glu &\num{6.4141e+01} & \num{8.8074e-01} & \num{9.2546e-01} & \num{1.3731e-02} & \num{1.4429e-02} \\ 
\cline{2 - 6}
\end{tabular}
\end{scriptsize}
\label{fig:NumericResult}
\end{center}
\end{table}


Both methods give estimates in the same ballpark.
It is also worthwhile noting that the $AR(1)$ approximation approach consistently gives larger standard error estimates than the regenerative approach.

\paragraph{Regenerative $\epsilon$-burn-in estimator \eqref{eps-burn-in}.} For the diabetes data set,  our estimate for $\eta$ is $1.0178$  ( $ 1.0178\pm 0.0008$ is a 95\% numerical confidence interval), therefore  an approximate $0.01$-burn-in period is $t_b\approx 101$.

For the $AR(1)$ approximation, substituting in the estimated parameters, we find that $t>5$ is sufficient for the $t$-th state to be within $0.01$ total variation distance to the stationary distribution. Thus, the $AR(1)$ approximation is very optimistic.
We note that we did not detect a practical difference between the two inequalities in Lemma~\ref{lem:bnds}. A comparison of  all bounds is given in Figure~\ref{fig:TVbounds}.
\begin{figure}[htb]
\begin{center}
\includegraphics[scale = 0.5]{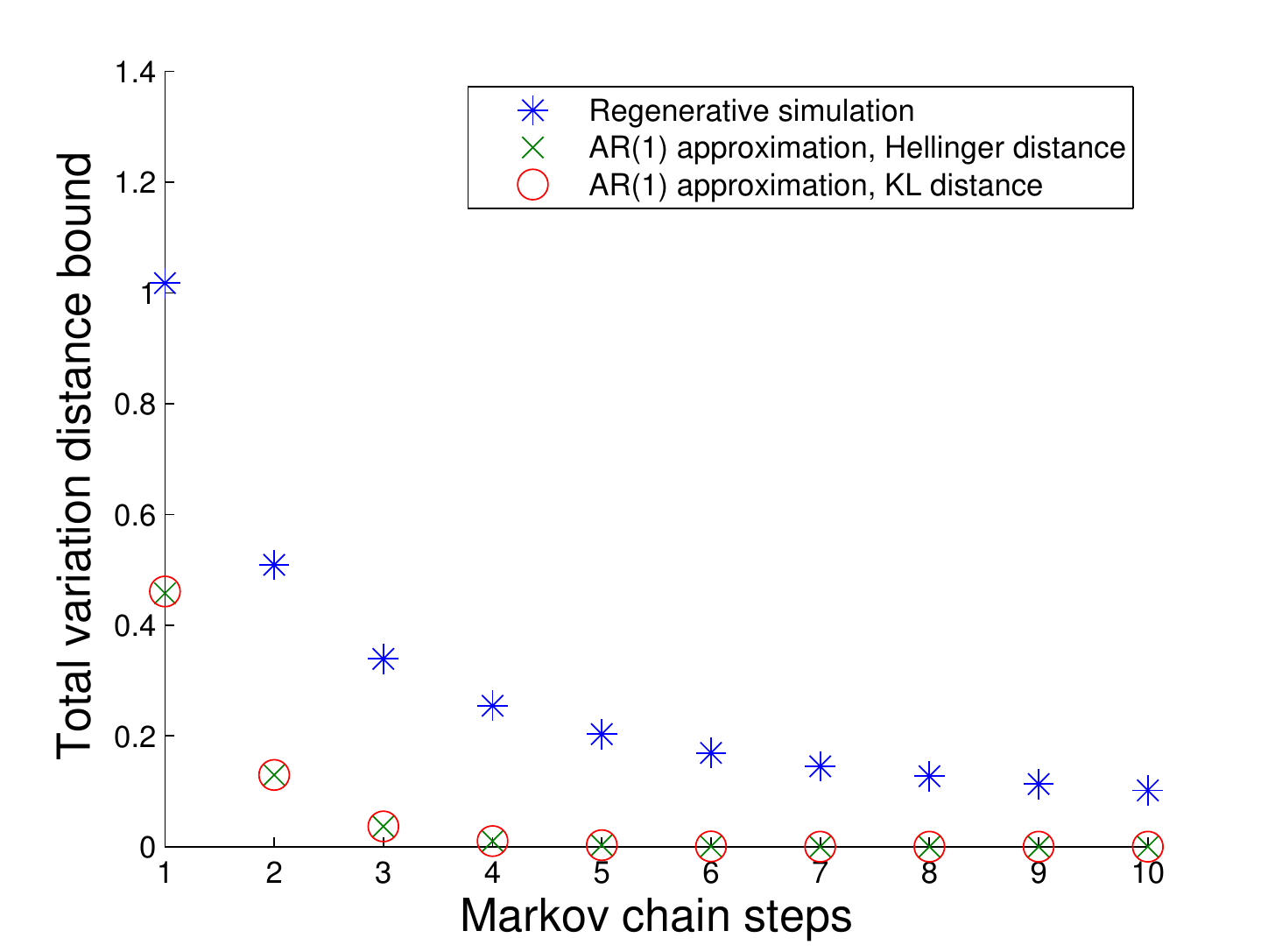}
\caption{A comparison of different approximations to the total variation distance bound.}
\label{fig:TVbounds}
\end{center}
\end{figure}

\subsection{Boston house price dataset}

The Boston house price dataset consists of $13$ predictor variables (crime rate per capita, proportion of residential land etc.) and a response variable which is the median value of owner-occupied homes for $n=506$ cases. We again model the variables using the Bayesian Lasso linear regression, and apply the regenerative Gibbs sampler to sample from the posterior distribution. 

\paragraph{Regenerative variance estimator \eqref{TAVC}.}
From Table~\ref{fig:NumericResult_Boston} we see that the regenerative variance estimator agrees with the $AR(1)$ approximation. It is worthwhile noting that the optimal average probability of regeneration is highly sensitive to the data. On the one hand, after optimizing with respect to $\alpha$, the diabetes dataset can achieve a probability of regeneration of more than $0.6$. On the other hand, the Boston house price dataset can barely achieve a probability of $0.04$. 
\begin{table}[htb]
\begin{scriptsize}
\begin{center}
\caption{A comparison of the standard error estimators for the Boston house price dataset}
\hspace{-1.5cm}\begin{tabular}{c|c|c|c|c|c|}  
\cline{2 - 6}
& Mean & Standard error& $AR(1)$ st. err. & Relative error & $AR(1)$ rel. err.\\
\cline{2 - 6}
crim &\num{-8.5303e-01} & \num{4.2883e-03} & \num{4.2029e-03} & \num{5.0241e-03} & \num{4.9241e-03} \\ 
\cline{2 - 6}
zn &\num{9.7795e-01} & \num{4.5787e-03} & \num{4.7987e-03} & \num{4.6810e-03} & \num{4.9059e-03} \\ 
\cline{2 - 6}
indus &\num{-1.1282e-03} & \num{5.7627e-03} & \num{5.4007e-03} & \num{4.1239e+00} & \num{3.8648e+00} \\ 
\cline{2 - 6}
chas &\num{6.8360e-01} & \num{3.3273e-03} & \num{3.1301e-03} & \num{4.8657e-03} & \num{4.5774e-03} \\ 
\cline{2 - 6}
nox &\num{-1.8889e+00} & \num{6.2934e-03} & \num{6.4078e-03} & \num{3.3304e-03} & \num{3.3909e-03} \\ 
\cline{2 - 6}
rm &\num{2.7099e+00} & \num{3.8381e-03} & \num{4.1590e-03} & \num{1.4167e-03} & \num{1.5351e-03} \\ 
\cline{2 - 6}
age &\num{-1.4436e-02} & \num{5.3901e-03} & \num{4.9713e-03} & \num{3.6201e-01} & \num{3.3388e-01} \\ 
\cline{2 - 6}
dis &\num{-2.9588e+00} & \num{5.4517e-03} & \num{5.9736e-03} & \num{1.8420e-03} & \num{2.0183e-03} \\ 
\cline{2 - 6}
rad &\num{2.2364e+00} & \num{8.4378e-03} & \num{9.1774e-03} & \num{3.7720e-03} & \num{4.1026e-03} \\ 
\cline{2 - 6}
tax &\num{-1.6823e+00} & \num{9.2717e-03} & \num{9.8996e-03} & \num{5.5113e-03} & \num{5.8845e-03} \\ 
\cline{2 - 6}
ptratio &\num{-2.0215e+00} & \num{3.9149e-03} & \num{4.0249e-03} & \num{1.9365e-03} & \num{1.9909e-03} \\ 
\cline{2 - 6}
b &\num{8.3368e-01} & \num{3.3051e-03} & \num{3.5869e-03} & \num{3.9689e-03} & \num{4.3073e-03} \\ 
\cline{2 - 6}
lstat &\num{-3.7205e+00} & \num{5.2659e-03} & \num{5.0817e-03} & \num{1.4153e-03} & \num{1.3658e-03} \\ 
\cline{2 - 6}
\end{tabular}
\label{fig:NumericResult_Boston}
\end{center}
\end{scriptsize}
\end{table}

\begin{figure}[htb]
\begin{center}
\label{fig:GridSearch_Boston}
\includegraphics[scale = 0.54]{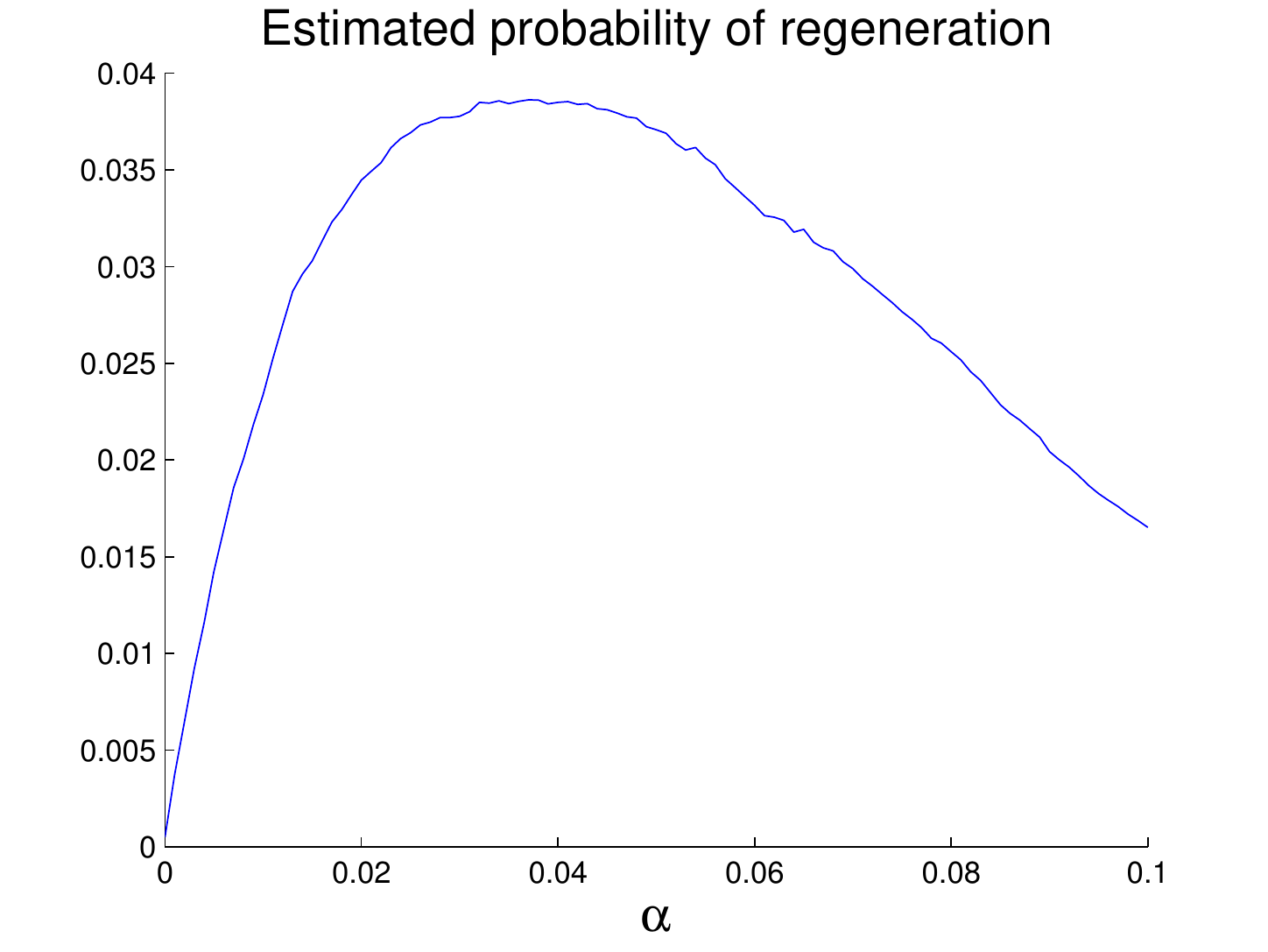}
\caption{Grid search to maximize the number of regenerations. Here the empirical Bayes estimator for the pair $(\lambda,\sigma)$ is $(0.613,4.68)$. We can see a clearly pronounced maximum at $\alpha\approx 0.04$.}
\end{center}
\end{figure}
Further experiments suggest that the dataset affects the probability of regeneration mainly through the estimated value for the Lasso parameter $\lambda$. In other words,  the number  of detected
regenerative events depends on
 the value of the Lasso parameter $\lambda$.  Although,  
fewer observed regenerations do not necessarily signify a high mean squared error, we need to observe at least two regenerations to be able to compute a valid estimate of the asymptotic variance.
Thus, a limitation of our regenerative sampling is that when
 $\lambda$ is very large, one may need to run the Markov chain for  many steps.

\paragraph{Regenerative $\epsilon$-burn-in estimator \eqref{eps-burn-in}.} 
 
For the housing data set,  our estimate for $\eta$ is $51$  ( $ 51\pm 6.1$ for a 95\% numerical confidence interval). Therefore,  an approximate $0.01$-burn-in period is $t_b\approx 5100$.

In contrast, the $AR(1)$ approximation is remarkably optimistic, as seen from this table of estimated  bounds:

\begin{center}
{\scriptsize
\begin{tabular}{c|c|c|c|c|c}
 distance bnd. & $9.5 \times 10^{-3}$ &  $7.8\times 10^{-4}$ &  
$1.1\times 10^{-4}$  & $1.7\times 10^{-5}$                  \\
\hline
step $t$ & 1 & 2 & 3& 4 
 \end{tabular}
}
\end{center}
We  can see   that $t\geq 1$ is sufficient for the $t$-th state to be within $0.01$ total variation distance to the stationary distribution. Thus, in this example there is a  significant disagreement between the regenerative  and the heuristic 
convergence assessment. We believe that while the regenerative estimate is too conservative, the heuristic one is too optimistic, and that the true $\epsilon$-burn-in is somewhere in-between. 

\section{Concluding Remarks}
In this paper, we  identify the regenerative times in the output of the popular Park \& Casella Gibbs sampler, which (approximately) simulates from the posterior of the Bayesian   Lasso. The resulting regenerative simulation algorithm allows practitioners to answer the two key questions that need to be answered for any  convergence assessment \cite{jones2001honest} of a Markov chain:
\begin{enumerate}[label=(\roman*)]
	\item What is the statistical error of any empirical average that aims to estimate a stationary quantity of interest? The answer is provided by the consistent mean squared error estimator \eqref{TAVC}.
	\item How long does it take for the Markov chain to get sufficiently close to the limiting distribution? One good answer is the consistent estimator \eqref{eps-burn-in} of 
	(an upper bound on)  the	$\epsilon$-burn-in of the Markov chain.
\end{enumerate}
Whenever applicable, one should use the regenerative estimators to tackle issue (i) and (ii), because  the popular alternatives, such as batch-means estimators and diagnostic plots, do not have the same sound theoretical foundation for their use.

While the  estimator \eqref{eps-burn-in} of the mean square error is not  novel   \cite{mykland1995regeneration,jones2001honest},   the regenerative estimator \eqref{eps-burn-in} of the burn-in period  appears to have been overlooked as a more rigorous approach to estimating the burn-in. This article fills this gap.

%

\appendix
\section{Proof of theorem~\ref{th:converge}}

We use the notation from Section~\ref{sec:reg process}. Let $\bb Q_t[A]$ be the distribution of a state $\v X$ 
picked  at random from the Markov chain states: $\v X_0,\ldots,\v X_t$. In other words,
\[
\bb Q_t[A]:=
\frac{1}{t+1}\sum_{k=0}^{t} \bb E[\kappa_k(A\gvn \v X_0)],
\]
where $\bb E[\kappa_k(A\gvn \v X_0)]=\bb P[\v X_k\in A]$ for all $k\geq 0$.
By assumption, the Markov chain is geometrically ergodic, that is, the distribution of the length $M$ of a  regenerative cycle of the chain is light-tailed. In other words,
$\bb E\exp(\epsilon_1 M)<\infty$ for some $\epsilon_1>0$. The process $\v X_0,\v X_1,\ldots$ is also a zero-delayed regenerative process, because by assumption the initial $\v X_0$ commences a new cycle. Therefore, the conditions of   Lemma~\ref{lem:uniform bias} (see below) are satisfied and we have:
\begin{equation}
\label{eq1}
\left\|\bb Q_t-\pi\right\|_\mathrm{TV}\leq \frac{\bb E M^2-\bb EM}{2(t+1)\bb EM}+\c O(\exp(-\varepsilon t))
\end{equation}
for some $\epsilon\in(0,\epsilon_1]$. In addition, Lemma~\ref{lem:time-avg bound} below states  that the distribution of the final state of the Markov chain, $\v X_t$, is closer to $\pi$ than a state picked at random from the history of the chain up until time $t$:  $\v X_0,\ldots,\v X_t$. In other words,
\begin{equation}
\label{eq2}
\|\bb E[\kappa_t(\cdot\gvn \v X_0)]-\pi\|_\mathrm{TV}\leq \left\|\bb Q_t-\pi\right\|_\mathrm{TV}
\end{equation}
The result of the theorem then follows by combining \eqref{eq1}+\eqref{eq2}. 
\begin{lemma}[Uniform bias estimate]
\label{lem:uniform bias}
Suppose $\v X_0,\v X_1,\ldots$ is a zero-delayed discrete regenerative process with 
regeneration times $0=T_0<T_1<T_2<\cdots$, where $T_n=M_1+\cdots+M_n$, and stationary distribution $\bb Q$. 
Let $\bb E \exp(\varepsilon_1 M)<\infty$ for some $\varepsilon_1> 0$ and let $\bb Q_t$
be the distribution of a state drawn at random from the whole history of the chain up until time $t$, that is, drawn at random from $\v X_0,\ldots,\v X_t$. Then, we have for some $\varepsilon\in(0,\varepsilon_1]$
\[
\sup_A|\bb Q_t[A]-\bb Q[A]|\leq \frac{\bb E M^2-\bb EM}{2(t+1)\bb EM}+\c O(\exp(-\varepsilon t))
\]
\end{lemma}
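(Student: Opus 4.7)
The plan is to convert the total variation discrepancy into an estimate of the mean discrete overshoot of the underlying renewal sequence $\{T_r\}$, and then apply an exponential renewal theorem. Writing $S_t(A):=\sum_{k=0}^t\bb I\{\v X_k\in A\}$, the claim reduces to bounding $|\bb E S_t(A)-(t+1)\bb Q[A]|$ uniformly in $A$, since $(t+1)(\bb Q_t[A]-\bb Q[A])=\bb E S_t(A)-(t+1)\bb Q[A]$.

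I would first complete $S_t(A)$ up to the first regeneration strictly after $t$. Let $\tau:=N(t)+1$, let $O_t:=T_\tau-t\geq 1$ denote the discrete overshoot, and let $V_t(A):=\sum_{k=t+1}^{T_\tau-1}\bb I\{\v X_k\in A\}$ count the visits to $A$ during the overshoot interval. Because $\{\tau\leq n\}=\{M_1+\cdots+M_n>t\}$ depends only on $M_1,\ldots,M_n$, the index $\tau$ is a stopping time for the iid cycles, so Wald's identity applied both to $\sum_{r=1}^\tau M_r$ and to the per-cycle occupation sums $H_r^A:=\sum_{k=T_{r-1}}^{T_r-1}\bb I\{\v X_k\in A\}$ yields
\[
\bb E[S_t(A)+V_t(A)]=\bb E\tau\cdot\bb E H_1^A=\bb E T_\tau\cdot\bb Q[A]=(t+\bb E O_t)\bb Q[A],
\]
where I used $\bb E H_1^A=\bb Q[A]\cdot\bb E M$, the defining regenerative formula for $\bb Q$. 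Rearranging gives $\bb E S_t(A)-(t+1)\bb Q[A]=(\bb E O_t-1)\bb Q[A]-\bb E V_t(A)$. Because $\bb Q[A]\in[0,1]$, $V_t(A)\leq O_t-1$ almost surely, and both summands are nonnegative, they lie in the common interval $[0,\bb E O_t-1]$, so their difference is bounded in absolute value by $\bb E O_t-1$ uniformly in $A$, which already gives
\[
\sup_A|\bb Q_t[A]-\bb Q[A]|\leq\frac{\bb E O_t-1}{t+1}.
\]

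The last step is an exponential renewal estimate for $\bb E O_t$. A direct calculation gives $\bb E O_t=\sum_{j=0}^t u(t-j)\mu(j)$, where $u(k):=\bb P[\text{regeneration at step }k]$ (with $u(0)=1$) and $\mu(j):=\bb E[(M-j)^+]$. The hypothesis $\bb E\exp(\varepsilon_1 M)<\infty$ gives $\mu(j)=\c O(\exp(-\varepsilon_1 j))$ directly, and via the classical exponential renewal theorem also yields $u(k)=1/\bb E M+\c O(\exp(-\varepsilon k))$ for some $\varepsilon\in(0,\varepsilon_1]$. Combining these two exponential decays with the identity $\sum_{j\geq 0}\mu(j)=(\bb E M^2+\bb E M)/2$ produces $\bb E O_t=(\bb E M^2+\bb E M)/(2\bb E M)+\c O(\exp(-\varepsilon t))$, so $\bb E O_t-1=(\bb E M^2-\bb E M)/(2\bb E M)+\c O(\exp(-\varepsilon t))$. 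Dividing by $t+1$ and absorbing the factor $1/(t+1)$ into the exponential remainder delivers the claimed bound.

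I expect the main obstacle to be establishing the geometric rate $u(k)-1/\bb E M=\c O(\exp(-\varepsilon k))$: the limit $1/\bb E M$ is Blackwell's theorem, but the exponential error requires either a Banach-algebra/generating-function argument applied to the renewal equation or a coupling construction, and $\varepsilon$ must be chosen so that the product of the two exponential decays (in $u$ and in $\mu$) still dominates when assembled. Everything preceding this is an elementary Wald-identity manipulation together with the trivial observation that two nonnegative quantities in $[0,\bb E O_t-1]$ cannot differ by more than $\bb E O_t-1$.
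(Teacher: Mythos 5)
Your proof is correct, but it reaches the bound by a genuinely different route than the paper. The paper works with the $A$-indexed bias function $e_A(t)=(t+1)(\bb Q[A]-\bb Q_t[A])$, shows it satisfies a renewal equation $e_A=v_A*u$ with forcing term $v_A(t)=\bb E[\sum_{k=t+1}^{M_1-1}Z_k(A);M_1\geq t+2]$, applies the exponential renewal theorem to each $A$ while asserting the error is uniform in $A$, and finally bounds $\sum_{k\geq 0}\sup_A|v_A(k)|\leq\frac{\bb EM^2-\bb EM}{2}$. You instead complete the occupation sum to a whole number of cycles, apply Wald's identity twice, and collapse the entire $A$-dependence into the single scalar $\bb EO_t-1$ (expected overshoot minus one) via the elementary observation that $(\bb EO_t-1)\bb Q[A]$ and $\bb EV_t(A)$ both lie in $[0,\bb EO_t-1]$; the exponential renewal theorem is then invoked only once, for the renewal measure $u$, and your identity $\sum_{j\geq0}\bb E[(M-j)^+]=\frac{\bb EM^2+\bb EM}{2}$ is consistent with the paper's constant after subtracting the $\bb EM$ contributed by the ``minus one.'' What your route buys is that uniformity over $A$ becomes trivial, whereas the paper's claim that the $\c O(\exp(-\varepsilon t))$ remainder from Asmussen's Theorem 2.10 holds uniformly in $A$ is a point that strictly speaking needs justification (the implicit constant depends on the forcing function $v_A$); what the paper's route buys is an exact first-order asymptotic for the bias at each fixed $A$ rather than only an upper bound. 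Both arguments rest on the same two pillars (Wald's identity and the exponential discrete renewal theorem, which tacitly requires aperiodicity of the cycle-length distribution), so the hypotheses consumed are identical, and the obstacle you flag --- the geometric rate $u(k)-1/\bb EM=\c O(\exp(-\varepsilon k))$ --- is precisely what the paper outsources to the cited renewal theorem.
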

\begin{proof}
The proof follows closely the ideas in \cite{glynn1994some}. Using the notation from Section~\ref{sec:reg process}, let 
$u(k)=\sum_{j=0}^k\bb P(T_j=k)=\bb P(\exists j:T_j=k)
$
 denote the renewal measure, and define the convolution operator 
$(a* b)(t)=\sum_{k=0}^t a(t-k)b(k)$ between two functions $a$ and $b$.
Further, define 
\[
\begin{split}
e_A(t)&:=(t+1)
(\bb Q[A]-\bb Q_t[A])=\bb E\sum_{k=0}^{t} Z_k(A),
\end{split}
\]
where $Z_k(A)=\bb Q[A]-\bb I\{\v X_k\in A\} $.
Wald's identity implies that
\[
\bb E \sum_{k=0}^{M_1-1}Z_k(A)=\bb Q[A]\bb EM_1-\bb E H_1(A)=0.
\]
Thus, we can then verify that $e_A$ satisfies the renewal equation
\[
e_A(t)=(v_A*u)(t),
\]
 where
\[
\begin{split}
v_A(t)&:=\textstyle\bb E\left[\sum_{k=0}^{M_1-1} Z_k(A)-\sum_{k=0}^{t} Z_k(A) ;M_1>t\right]\\
&=\textstyle\bb E\left[\sum_{k=t+1}^{M_1-1} Z_k(A) ;M_1\geq t+2\right]
\end{split}
\] 
with
\[
|v_A(t)|\leq \bb E\left[M_1-t-1 ;M_1\geq t+2\right]
\]
Since $\bb E\exp(\varepsilon_1 M)<\infty$, then there exists some
$\varepsilon_2\in(0,\varepsilon_1]$ such that 
$\bb EM\exp(\varepsilon_2 M)\leq \infty$, and
therefore
\[
\begin{split}
|v_A(t)|\leq \bb E\left[M;M\geq t\right]&\leq \frac{\bb E[M\exp(\varepsilon_2 M);M\geq t]}{\exp(\varepsilon_2 t)}\\
&\leq \frac{\bb E[M\exp(\varepsilon_2 M)]}{\exp(\varepsilon_2 t)}=\c O(\exp(-\varepsilon_2 t))
\end{split}
\]
An application of \cite[Theorem 2.10 on Page 196]{asmussen2008applied} yields  for some $\varepsilon\in(0,\varepsilon_2]$:
\[
e_A(t)= \frac{\sum_{k\geq 0} v_A(k)}{\bb EM}+\c O(\exp(-\varepsilon t)),\quad t\uparrow\infty
\]
uniformly in $A$. In other words,
\[
\sup_A|e_A(t)|\leq \frac{\sum_{k\geq 0} \sup_A|v_A(k)|}{\bb EM}+\c O(\exp(-\varepsilon t)),\quad t\uparrow\infty
\]
Simplifying the upper bound $\sum_{k\geq 0} \sup_A|v_A(k)|\leq \sum_{k\geq 0}\bb E[(M-k-1)^+]=\frac{\bb EM^2-\bb EM}{2}$ yields the desired result.

\end{proof}

\begin{lemma}[Time-average bound on total variation distance]
\label{lem:time-avg bound}
Let $\v X_1,\v X_2,\ldots,\v X_t$ be a Markov chain with a $t$-step transition kernel $\kappa_t(\cdot\,|\,  \v X_0)$, $\v X_0\sim \pi_0$ and a stationary/limiting distribution $\pi$. Then, the distribution of a 
random variable drawn uniformly from 
$\v X_1,\ldots,\v X_t$ is further away from $\pi$
than the distribution of the last state $\v X_t$. In other words, we have  that 
\[
\sup_{A}|\bb E[\kappa_t(A\gvn \v  X_0)]-\pi(A)|\leq 
\sup_{A}\Big|\frac{1}{t}\sum^{t}_{j=1}(\bb E[\kappa_j(A\gvn \v  X_0)]-\pi(A))\Big|
\]
\end{lemma}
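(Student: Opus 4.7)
The natural plan is to evaluate both sides at a single set $A^*$ that extremises the LHS, and to exploit the monotonicity of $\|\mu_j-\pi\|_{\mathrm{TV}}$ along the chain. Write $\mu_j(A) := \bb E[\kappa_j(A \gvn \v X_0)]$ for the marginal law of $\v X_j$ and $\phi_j := \mu_j - \pi$, so the claim becomes $\|\phi_t\|_{\mathrm{TV}} \leq \|\bar\phi_t\|_{\mathrm{TV}}$ with $\bar\phi_t := \frac{1}{t}\sum_{j=1}^t \phi_j$. The Markov semigroup identity $\phi_{j+1}=\phi_j K$ combined with the data-processing contraction $\|\phi K\|_{\mathrm{TV}} \leq \|\phi\|_{\mathrm{TV}}$ for zero-mass signed measures $\phi$ immediately yields the monotonicity $\|\phi_j\|_{\mathrm{TV}} \geq \|\phi_t\|_{\mathrm{TV}}$ for every $1 \leq j \leq t$; this is the single input from Markov theory.

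Next I would pick a set $A^*$ attaining $\phi_t(A^*) = \|\phi_t\|_{\mathrm{TV}}$ (WLOG positive; otherwise pass to its complement). If one can establish the pointwise refinement $\phi_j(A^*) \geq \phi_t(A^*)$ for every $1 \leq j \leq t$, then averaging over $j$ and taking a supremum over $A$ on the RHS gives
\[
\|\phi_t\|_{\mathrm{TV}} = \phi_t(A^*) \leq \frac{1}{t}\sum_{j=1}^t \phi_j(A^*) \leq \sup_A \Bigl|\frac{1}{t}\sum_{j=1}^t \phi_j(A)\Bigr| = \|\bar\phi_t\|_{\mathrm{TV}},
\]
which is exactly the stated inequality.

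The main obstacle is the pointwise comparison $\phi_j(A^*) \geq \phi_t(A^*)$ at the \emph{fixed} set $A^*$: the data-processing inequality controls only the norm of $\phi_j$, not its signed value on a prescribed set. Rewriting the right-hand side as $\phi_t(A^*) = \phi_j(K^{t-j}\bb I_{A^*})$, where $K^{t-j}\bb I_{A^*}(x) = \kappa_{t-j}(A^*\gvn x) \in [0,1]$, the desired inequality becomes $\phi_j(\bb I_{A^*} - K^{t-j}\bb I_{A^*}) \geq 0$. I would try to establish this by invoking the reversibility and non-negative $L^2(\pi)$-spectrum typical of Gibbs-type kernels (including the Park \& Casella sampler): spectrally decomposing $\phi_1$, each eigencomponent of $\phi_j$ decays like $\lambda^{j-1}$ with $\lambda \in [0,1]$, so the pointwise inequality can be verified eigencomponent by eigencomponent, with positivity of $\lambda$ ruling out sign flips that would otherwise cause cancellation in the Cesàro average. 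A cleaner measure-theoretic alternative would start from the variational characterisation $\|\phi_j\|_{\mathrm{TV}} = \sup_{\|f\|_\infty \leq 1/2}\phi_j(f)$ and use an optimality/subgradient condition at $f = \bb I_{A^*} - \tfrac12$ to upgrade the monotonicity of the norms to monotonicity of the signed values on the extremal set. I expect one of these two routes to carry through and to constitute the only nontrivial step of the proof.
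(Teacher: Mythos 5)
Your reduction of the lemma to the pointwise comparison $\phi_j(A^*)\ge\phi_t(A^*)$ at a fixed extremal set $A^*$ is a correct reformulation, and the monotonicity $\|\phi_j\|_{\mathrm{TV}}\ge\|\phi_t\|_{\mathrm{TV}}$ via data processing is fine, but the proposal stops exactly where the work begins: neither of the two routes you sketch for the decisive step goes through. In the spectral route, writing $\phi_j(A^*)=\sum_i c_i\lambda_i^{\,j}\langle e_i,\bb I_{A^*}\rangle_\pi$, the scalar factors do satisfy $\lambda_i^{\,j}\ge\lambda_i^{\,t}$ for $\lambda_i\in[0,1]$ and $j\le t$, but the coefficients $c_i\langle e_i,\bb I_{A^*}\rangle_\pi$ carry uncontrolled signs, so the inequality cannot be ``verified eigencomponent by eigencomponent''; summing terms of mixed sign whose magnitudes decay at different rates gives no sign control on the total. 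The variational route fails for the reason you yourself identify one sentence earlier: monotonicity of $\sup_{\|f\|_\infty\le 1/2}\phi_j(f)$ in $j$ says nothing about the value of $\phi_j$ at the \emph{fixed} maximiser for $\phi_t$, and there is no subgradient identity that upgrades norm monotonicity to monotonicity of signed values on a prescribed set. Moreover, you have (correctly) smuggled in hypotheses --- reversibility and a nonnegative $L^2(\pi)$-spectrum --- that are not part of the lemma, so even a completed argument along these lines would prove a different statement.

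Your instinct that positivity is indispensable deserves to be taken seriously, because the inequality as stated fails for a general Markov chain: for the two-state chain with transition matrix $\left(\begin{smallmatrix}0.1&0.9\\0.9&0.1\end{smallmatrix}\right)$ started at the first state, $\|\mu_2-\pi\|_{\mathrm{TV}}=0.32$ while $\|\tfrac{1}{2}(\mu_1+\mu_2)-\pi\|_{\mathrm{TV}}=0.04$, precisely because the negative eigenvalue makes the marginals oscillate around $\pi$ and cancel in the Ces\`aro average. So the gap in your argument is not a gap in exposition; it cannot be filled from the stated hypotheses. For comparison, the paper's own proof takes an entirely different route: it maximally couples a draw from the time-averaged law with a draw from $\pi$, and uses Scheff\'e's identity together with subadditivity of $x\mapsto\min\{x,1\}$ to claim $\|\bar\mu_t-\pi\|_{\mathrm{TV}}\ge\frac{1}{t}\sum_{j=1}^t\|\mu_j-\pi\|_{\mathrm{TV}}$ before invoking the same monotonicity you use. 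Note, however, that this intermediate inequality runs in the opposite direction to the triangle inequality and is contradicted by the same two-state example, so the obstruction you ran into is real and not an artefact of your particular strategy; a correct proof must restrict to chains with additional structure (e.g., positive operators, as for the relevant marginal Gibbs chains).
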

\begin{proof}
For simplicity, assume that there exist  densities
 $\bb E[\kappa_t( \v x\gvn  \v X_0)]$ and $\pi( \v x)$, corresponding to  $\bb E[\kappa_t(A\gvn  \v X_0)]=\bb P[\v X_t\in A]$ and $\pi(A)$ (which is the case with the Gibbs sampler of Park \& Casella anyway). We then use the following four facts.

First, for any nonnegative functions $g_1$ and $g_2$, we have
$
\min\{g_1(\v x)+g_2( \v x),1\}\leq \min\{g_1(\v x),1\}+\min\{g_2( \v x),1\}
$.
More generally, for any nonnegative functions $g_1,g_2,\ldots$
\begin{equation}
\label{ineq min}
\min\Big\{\sum_jg_j(\v x),1\Big\}\leq \sum_j\min\{g_j( \v x),1\}
\end{equation}
Second,  Sheffe's lemma states that for any densities $p$ and $q$:
\begin{equation}
\label{Sheffe2}
\begin{split}
\sup_A\left|\int_A p(\v x)\di \v x-\int_A q(\v x)\di \v x\right|&=\frac{1}{2}\int |p(\v x)-q(\v x)|\di \v x\\
&=1-\int \min\{p(\v x),q( \v x)\}\di \v x
\end{split}
\end{equation}
Third, for any $s'\leq s$, we have
\begin{equation}
\label{monotonicity}
\|\bb E[\kappa_{s'}(\cdot\gvn  \v X_0)]-\pi\|_\mathrm{TV}\geq
\|\bb E[\kappa_{s}(\cdot\gvn \v X_0)]-\pi\|_\mathrm{TV},
\end{equation}
which is nothing more than a statement of the obvious fact that the more we run the Markov chain, the closer we get to its stationary distribution. 
Fourth, two random variables $\v X_1$ and $\v X_2$ on the same probability space and with marginal distributions $\pi_1(A)=\bb P[\v X_1\in A]$ and $\pi_2(A)=\bb P[\v X_2\in A]$
are  maximally coupled \cite{thorisson1986maximal} when their joint distribution is such that
\[
\|\pi_1-\pi_2\|_\mathrm{TV}=\bb P[\v X_1\not =\v X_2]
\]
We now apply these four results as follows.
Let $\v X$ and $\v Y$ be maximally coupled with marginal densities
$\frac{1}{t}\sum^{t}_{j=1}\bb E[\kappa_j(\v x\gvn  \v X_0)]$ and $\pi(\v y)$, respectively. Then, $\bb P[\v X\not = \v Y]=\sup_{A}\left|\frac{1}{t}\sum^{t}_{j=1}(\bb E[\kappa_j(A\gvn \v X_0)]-\pi(A))\right|$ and
\[
\begin{split}
\bb P[\v X=\v Y]&\stackrel{\eqref{Sheffe2}}{=}\int \pi(\v x) \min\Big\{\frac{1}{t}\sum^{t}_{j=1}\frac{\bb E[\kappa_j(\v x\gvn \v  X_0)]}{\pi(\v x)},1\Big\} \di \v x\\
&\stackrel{\eqref{ineq min}}{\leq} \frac{1}{t}\sum^{t}_{j=1}\int\min\left\{\bb E[\kappa_j(\v x\gvn \v X_0)],\pi(\v x)\right\} \di\v x\\
&\stackrel{\eqref{Sheffe2}}{=} \frac{1}{t}\sum^{t}_{j=1}
\left(1-\sup_A|\bb E[\kappa_j(A\gvn  \v X_0)]-\pi(A)|\right)\\
&= 1-\frac{1}{t}\sum^{t}_{j=1}\|\bb E[\kappa_j(\cdot\gvn  \v X_0)]-\pi\|_\mathrm{TV}.
\end{split}
\]
Then, using $\bb P[\v X\not= \v Y]=1-\bb P[\v X=\v Y]$, the last inequality implies that
\[
\bb P[\v X\not= \v Y]\geq \frac{1}{t}\sum^{t}_{j=1}\|\bb E[\kappa_j(\cdot\gvn  \v X_0)]-\pi\|_\mathrm{TV}\stackrel{\eqref{monotonicity}}{\geq} \|\bb E[\kappa_t(\cdot\gvn  \v X_0)]-\pi\|_\mathrm{TV},
\]
whence the desired result follows. 
\end{proof}

\section{Background: Gibbs sampler for the Bayesian Lasso}
\label{app:BLasso}

The first key insight in \cite{Park2008bayesian} is that a $\mathsf{Laplace}(0,1/\lambda)$ density is in fact a Gaussian-scale mixture \cite{Andrews1974}. In particular, for each $\beta_j$, $j\in\{1,\ldots,p\}$, we have the identity,
\[
\frac{\lambda}{2}\exp\left(-\lambda|\beta_j|\right)=\int_0^\infty\frac{1}{\sqrt{2\pi s_j}}\exp\left(-\frac{\beta_j^2}{2s_j}\right)\times\frac{\lambda^2}{2}\exp\left(-\frac{\lambda^2}{2}s_j\right)\;\di  s_j.
\]
It follows form the change of variable $\tau_j=1/s_j$,
\begin{equation}
\frac{\lambda}{2}\exp\left(-\lambda|\beta_j|\right)=\int_0^\infty
\frac{\lambda^2}{2\sqrt{2\pi}}\exp\left(-\frac{\beta_j^2\tau_j}{2}\right)\exp\left(-\frac{\lambda^2}{2\tau_j}\right)\tau_j^{-3/2}\;\di \tau_j.
\label{eq:identity}
\end{equation}
Hence if one considers sampling the pair $(\v\beta,\v\tau)\in\bb R^p\times\bb R_+^p$ from the joint density $\pi(\v\beta,\v\tau|\lambda,\sigma^2)=$
\begin{equation}
=\frac{\exp\left(-\frac{1}{2\sigma^2}\|\v y-\m X\v\beta\|_2^2\right)\prod_{j=1}^p\frac{\lambda^2}{2\sqrt{2\pi}}\exp\left(-\frac{\beta_j^2\tau_j}{2}\right)\exp\left(-\frac{\lambda^2}{2\tau_j}\right)\tau_j^{-3/2}}{\ell(\lambda,\sigma^2)},
\label{eq:target}
\end{equation}
the marginal samples $\v\beta$, from samples of the pair $(\v\beta,\v\tau)$, have the same distribution as \eqref{eq:posterior}. This is because \eqref{eq:identity} implies $\int_{\bb R_+^p}\pi(\v\beta,\v\tau|\lambda,\sigma^2)\;\di \v\tau=\pi(\v\beta|\lambda,\sigma^2)$.

The form of \eqref{eq:target} suggests a natural (block) Gibbs sampler that cycles between the full conditional distributions $\pi(\v\beta|\v\tau,\lambda,\sigma^2)$ and $\pi(\v\tau|\v\beta,\lambda,\sigma^2)$. The second key insight in \cite{Park2008bayesian} is that $\pi(\v\tau|\v\beta,\lambda,\sigma^2)$ takes the product form
\[
\pi(\v\tau|\v\beta,\lambda,\sigma^2)\varpropto\prod_{j=1}^p\exp\left(-\frac{\beta_j^2\tau_j}{2}\right)\exp\left(-\frac{\lambda^2}{2\tau_j}\right)\tau_j^{-3/2}.
\]
This means each $\tau_j$ are conditionally independent. Moreover, the conditional distribution of $\tau_j$ is $\mathsf{Wald}(\lambda',\mu_j')$ where $\lambda'=\lambda^2$ and $\mu_j'=\lambda/|\beta_j|$ (see, for example, \cite{chhikara1988}). Finally, it is not hard to show that
\[
\pi(\v\beta|\v\tau,\lambda,\sigma^2)\varpropto\exp\left(-\frac{1}{2\sigma^2}\left(\v\beta-\m A\m X^\top \v y \right)^\top\m A^{-1}\left(\v\beta-\m A\m X^\top \v y\right)^\top\right),
\]
where $\m A^{-1}=\m X^\top\m X+\mbox{diag}(\v\tau)$ is a symmetric invertible matrix. This means, $\v\beta$ conditional on $\v\tau$, is a $p$-dimensional Gaussian random variable with the mean vector $\m A\m X^\top \v y$ and the covariance matrix $\sigma^2\m A$.

At this stage one may wonder how  we determine the pair $(\lambda,\sigma^2)$. In fact, one may choose to adapt a fully Bayesian approach and assign the pair some prior distributions, see \cite{Leng2014}. However, in this paper we take the empirical Bayes approach and use the estimator $(\hat\lambda,\hat\sigma^2)=\mbox{argmax }\ell(\lambda,\sigma^2)$. This is because the parameters $(\lambda,\sigma^2)$ are rarely of interest, that is, they are nuisance parameters. 
In this paper, we use the approximate EM algorithm of \cite{Casella2001} to solve the program $(\hat\lambda,\hat\sigma^2)=\mbox{argmax }\ell(\lambda,\sigma^2)$.

\section{Proof of Lemma~\ref{lem:minorization}}
Our strategy follows from the approach described in \cite{mykland1995regeneration} and used in \cite{roy2007convergence}. Denote $\mathscr{X}=\bb R^p\times\bb R_+^p$ and fix $(\tilde{\v\beta},\tilde{\v\tau})\in\mathscr{X}$ and $\c D\subseteq\mathscr {X}$. Observe that $ \kappa(\v\beta_{k+1},\v\tau_{k+1}|\v\beta_k,\v\tau_k)=$
\[
 \begin{split}
&=\pi(\v\tau_{k+1}|\v\beta_k)\pi(\v\beta_{k+1}|\v\tau_{k+1})\\
 &=\frac{\pi(\v\tau_{k+1}|\v\beta_k)}{\pi(\v\tau_{k+1}|\tilde{\v\beta})}\pi(\v\tau_{k+1}|\tilde{\v\beta})
 \pi(\v\beta_{k+1}|\v\tau_{k+1})\\
 &\geq\varepsilon\inf_{(\v\beta_{k+1},\v\tau_{k+1})\in \c D}\left\{\frac{\pi(\v\tau_{k+1}|\v\beta_k)}
 {\pi(\v\tau_{k+1}|\tilde{\v\beta})}\right\}\varepsilon^{-1}\pi(\v\tau_{k+1}|\tilde{\v\beta})
 \pi(\v\beta_{k+1}|\v\tau_{k+1})\bb I\left\{(\v\beta_{k+1},\v\tau_{k+1})\in D\right\}\\
 &=\varepsilon\inf_{(\v\beta_{k+1},\v\tau_{k+1})\in \c D}\left\{\frac{\pi(\v\tau_{k+1}|\v\beta_k)}
 {\pi(\v\tau_{k+1}|\tilde{\v\beta})}\right\}\varepsilon^{-1}\kappa(\v\beta_{k+1},\v\tau_{k+1}|\tilde{\v\beta},\tilde{\v\tau})
 \bb I\left\{(\v\beta_{k+1},\v\tau_{k+1})\in \c D\right\}
 \end{split}
\]
where
\[
 \varepsilon=\int_{\c D}\kappa (\v\beta_{k+1},\v\tau_{k+1}|\tilde{\v\beta},\tilde{\v\tau}) \;\di (\v\beta_{k+1},\v\tau_{k+1}).
\]
In particular, let us take $\c D=\bb R^p\times\prod_{j=1}^p[c_j,d_j]:=\bb R^p\times [\v c,\v d]$ and $(\tilde{\v\beta},\tilde{\v\tau})=(\hat{\v\beta},\v1)$ where 
$\hat{\v\beta}$ is the solution to the frequentist Lasso penalized regression model. It follows that we can
take
\[
 \begin{split}
 \varepsilon^{-1}s(\v\beta_{k},\v\tau_k)&=\inf_{(\v\beta_{k+1},\v\tau_{k+1})\in\c D}\left\{\frac{\pi(\v\tau_{k+1}|\v\beta_k)}
 {\pi(\v\tau_{k+1}|\hat{\v\beta})}\right\}\\
 &=\inf_{\v\tau_{k+1}\in[\v c,\v d]}\exp\left(-\frac{1}{2}\sum_{j=1}^{p}\tau_{k+1,j}(\beta_{k,j}^2-\hat\beta^2_{j})\right)\\
 \varepsilon^{-1}s(\v\beta_{k},\v\tau_k)&=\textstyle\exp\left(-\frac{\sum_{j=1}^{p}d_j
 (\beta_{k,j}^2-\hat\beta^2_{j})\bb I\left\{\beta_{k,j}^2-\hat\beta^2_{j}\geq 0\right\}}{2}-\frac{\sum_{j=1}^{p}c_j(\beta_{k,j}^2-\hat\beta^2_{j})\bb I\left\{\beta_{k,j}^2-\hat\beta^2_{j}< 0\right\}}{2}\right)\\
&=\textstyle \exp\left(-\frac{\v d^\top\v(\v\beta_k^2-\hat{\v\beta}^2)_+}{2}-\frac{\v c^\top\v(\v\beta_k^2-\hat{\v\beta}^2)_-}{2}\right)
 \end{split}
\]
Here $j$ is the index for entries within the vectors and $k$ is the index for the steps
in the Markov chain. The above calculation recalls the fact that the normalizing constant for the density function of a
$\mathsf{Wald}(\lambda',\mu')$ random variable is $(\lambda/2\pi)^{1/2}$. 

Denote
\[
 \nu(\v\beta_{k+1},\v\tau_{k+1})=\varepsilon^{-1}\kappa(\v\beta_{k+1},\v\tau_{k+1}|\tilde{\v\beta}
 ,\tilde{\v\tau}) 
 \bb I\left\{(\v\beta_{k+1},\v\tau_{k+1})\in\c D\right\}.
\]
Therefore, by construction we have
\[
\kappa((\v\beta_{k+1},\v \tau_{k+1})|(\v \beta_k,\v\tau_k))\geq\nu(\v\beta_{k+1},\v\tau_{k+1})s(\v\beta_k,\v\tau_k),\qquad\forall (\v\beta_k,\v\tau_k)
\]
as required. 
For the probability of regeneration, we then obtain:
	\[
	 \begin{split}
	  \psi_k&=\frac{s(\v\beta_k,\v\tau_k)\nu(\v\beta_{k+1},\v\tau_{k+1})}{\kappa(\v\beta_{k+1},\v\tau_{k+1}|\v\beta_{k},\v\tau_{k})}\\
		&=\frac{\varepsilon^{-1}s(\v\beta_k,\v\tau_k)\pi(\v\tau_{k+1}|\tilde{\v\beta})
 \pi(\v\beta_{k+1}|\v\tau_{k+1})\bb I\left\{(\v\beta_{k+1},\v\tau_{k+1})\in D\right\}}{\pi(\v\tau_{k+1}|\v\beta_t)\pi(\v\beta_{k+1}|\v\tau_{k+1})}\\
		&=\frac{\varepsilon^{-1}s(\v\beta_k,\v\tau_k)\pi(\v\tau_{k+1}|\tilde{\v\beta})\bb I\left\{\v\tau_{k+1}\in[\v c,\v d]\right\}}{\pi(\v\tau_{k+1}|\v\beta_k)}\\
		&=\varepsilon^{-1}s(\v\beta_k,\v\tau_k)\exp\left(\frac{1}{2}\sum_{j=1}^{p}\tau_{k+1}\left(\beta_{k,j}^2-\hat\beta_{j}^2\right)\right)\times\bb I\left\{\v\tau_{k+1}\in[\v c,\v d]\right\}\\
		&=\exp\Bigg(-\frac{1}{2}\sum_{j=1}^{p}(u_j-\tau_{k+1,j})
 \left(\beta_{k,j}^2-\hat\beta^2_{j}\right)\bb I\left\{\beta_{k,j}^2-\hat\beta^2_{j}\geq 0\right\}\\
 &\quad-\frac{1}{2}\sum_{j=1}^{p}(l_j-\tau_{k+1,j})\left(\beta_{k,j}^2-\hat\beta_{j}^2\right)\bb I\left\{\beta_{k,j}^2-\hat\beta^2_{j}< 0\right\}\Bigg)\times \bb I\left\{\v\tau_{k+1}\in[\v c,\v d]\right\}
		\end{split}
	\]
	Therefore, 
	\begin{equation*}
	\psi_k=\exp\left(-\frac{1}{2}(\v u-\v\tau_{k+1})^\top(\v\beta_k^2-\hat{\v\beta}^2)_+-\frac{1}{2}(\v l-\v\tau_{k+1})^\top(\v\beta_k^2-\hat{\v\beta}^2)_-\right)\times \bb I\left\{\v\tau_{k+1}\in[\v c,\v d]\right\}.
	\end{equation*}

\section{Proof of Lemma~\ref{lem:bnds}}

The first  bound is derived
from the facts: i) 
$\frac{1}{2}\|\phi_1-\phi_2\|_1\leq  \|\sqrt{\phi_1}-\sqrt{\phi_2}\|_2$, where $\phi_1$ and $\phi_2$ are any probability densities; and  ii)
for two Gaussian densities $\phi_1, \phi_2$ with means $(\mu_1,\mu_2)$
and variances $(\sigma_1^2,\sigma_2^2)$ the $L_2$ norm $\|\sqrt{\phi_1}-\sqrt{\phi_2}\|_2^2$ is explicitly given by $2-2\sqrt{\frac{2\sigma_1\sigma_2}{\sigma^2_1+\sigma_2^2}}\exp\left(-\frac{(\mu_1-\mu_2)^2}{4(\sigma_1^2+\sigma_2^2)}\right)$. Then, we obtain
\[\textstyle
\sup_A|\bb P[Y_t\in A]-\bb P[Y_\infty\in A]|\leq \sqrt{2-2\sqrt{\frac{2\sigma_t\sigma}{\sigma^2_t+\sigma^2}}\exp\left(-\frac{(\mu_t-\mu)^2}{4(\sigma_t^2+\sigma^2)}\right)},
\]
where $Y_t$ is a Gaussian with 
mean $c\frac{1-\rho^t}{1-\rho}+\rho^t y_0$ and variance $\sigma^2_\varepsilon\frac{1-\rho^{2t}}{1-\rho^2}$. 
The second, slightly looser bound, is obtained via Pinsker inequality.

\bibliographystyle{plain}

\end{document}